\documentclass{article}

\usepackage{arxiv}
\usepackage[utf8]{inputenc} 
\usepackage[T1]{fontenc}    
\usepackage{hyperref}       
\usepackage{url}            
\usepackage{booktabs}       
\usepackage{amsfonts}       
\usepackage{nicefrac}       
\usepackage{microtype}      
\usepackage{graphicx}
\usepackage{doi}
\usepackage[affil-it]{authblk}
\usepackage[english]{babel}
\usepackage{here}
\usepackage{CJKutf8}
\usepackage{pxrubrica}
\usepackage{subcaption}
\usepackage{amsmath}
\usepackage{amssymb}
\usepackage{amsthm}
\usepackage{bm}

\usepackage[sorting=none]{biblatex}
\addbibresource{ESE.bib}

\theoremstyle{plain}
\newtheorem{thm}{Theorem}[section]
\newtheorem*{thm*}{Theorem}
\newtheorem*{lem*}{Lemma}
\newtheorem{lem}[thm]{Lemma}
\newtheorem{prop}[thm]{Proposition}
\newtheorem{defn}[thm]{Definition}
\newcommand*{\QEDA}{\null\nobreak\hfill\ensuremath{\blacksquare}}
\newcommand*{\QEDB}{\null\nobreak\hfill\ensuremath{\square}}

\usepackage{mathtools}
\usepackage{dashbox}
\newcommand\dashedph[1][H]{\setlength{\fboxsep}{0pt}\setlength{\dashlength}{2.2pt}\setlength{\dashdash}{1.1pt} \dbox{\phantom{#1}}}
\usepackage{braket}
\usepackage{enumitem}
\usepackage{tikz}
\usepackage[all]{xy}
\usepackage{pgfplots}

\usetikzlibrary{arrows.meta}
\usetikzlibrary{backgrounds}
\usepgfplotslibrary{patchplots}
\usepgfplotslibrary{fillbetween}
\pgfplotsset{%
    layers/standard/.define layer set={%
        background,axis background,axis grid,axis ticks,axis lines,axis tick labels,pre main,main,axis descriptions,axis foreground%
    }{
        grid style={/pgfplots/on layer=axis grid},%
        tick style={/pgfplots/on layer=axis ticks},%
        axis line style={/pgfplots/on layer=axis lines},%
        label style={/pgfplots/on layer=axis descriptions},%
        legend style={/pgfplots/on layer=axis descriptions},%
        title style={/pgfplots/on layer=axis descriptions},%
        colorbar style={/pgfplots/on layer=axis descriptions},%
        ticklabel style={/pgfplots/on layer=axis tick labels},%
        axis background@ style={/pgfplots/on layer=axis background},%
        3d box foreground style={/pgfplots/on layer=axis foreground},%
    },
}

\newcommand\setR{\mathbb{R}}

\newcommand\setE{\mathbb{E}}

\newcommand\bigO{\mathcal{O}}
\newcommand\inO{\in}

\newcommand\Cl[1]{\overline{#1}}

\newcommand\norm[1]{\|#1\|}
\newcommand\Norm[1]{\left\|#1\right\|}

\newcommand\pare[1]{(#1)}
\newcommand\Pare[1]{\left(#1\right)}
\newcommand\PARE[2]{\left(\rule{0cm}{#1}\right.\!#2\!\left.\rule{0cm}{#1}\right)}
\newcommand\curl[1]{\{#1\}}

\newcommand\squa[1]{[#1]}

\newcommand\angl[1]{\langle#1\rangle}
\newcommand\Angl[1]{\left\langle#1\right\rangle}

\newcommand\lfrac[2]{#1/#2}

\newcommand\od[2]{\frac{d #1}{d #2}}

\newcommand\sod[2]{\lfrac{d #1}{d #2}}

\newcommand\pd[2]{\frac{\partial #1}{\partial #2}}

\newcommand\spd[2]{\lfrac{\partial #1}{\partial #2}}

\newcommand\squote[1]{`#1'}
\newcommand\dquote[1]{``#1''}
\newcommand\ja[1]{\begin{CJK}{UTF8}{ipxg}#1\end{CJK}}

\newcommand\newton[2]{\underset{#1}{\underline{#2}}}

\newcommand\GammaO[0]{{\Gamma_{\hspace{-0.15em}[0]}}}

\newcommand\GammaD[0]{{\Gamma_{\text{D}}}}
\newcommand\GammaN[0]{{\Gamma_{\text{N}}}}

\newcommand\uu[0]{\hspace{-0.5em}\begin{array}{l}\scriptstyle u^1=s\\[-0.5em] \scriptstyle u^2=0\end{array}\hspace{-0.5em}}
\newcommand\domainbreadth[0]{{b}}

\allowdisplaybreaks[4]

\title{\textbf{Weaving paper strips for designing of general curved surface with geometrical elasticity}}
\author[1,*]{Yuto Horikawa}
\author[1]{Ryuichi Tarumi}
\affil[1]{Graduate School of Engineering Science, Osaka University
1-3, Machikaneyama, Toyonaka, Osaka 560-8531 Japan}
\affil[*]{Corresponding author: hyrodium@gmail.com}

\begin{document}
\maketitle
\begin{abstract}
    This study proposes \squote{amigami} as a new method of creating a general curved surface.
    It conducts the shape optimization of weaving paper strips based on the theory of nonlinear elasticity on Riemannian manifolds.
    The target surface is split into small curved strips by cutting the medium along with its coordinates,
    and each strip is embedded into a flat paper sheet to minimize a strain energy functional due to the in-plane deformation.
    The weak form equilibrium equation is derived from a Lie derivative with the virtual displacement vector field,
    and the equation is solved numerically using the Galerkin method with a non-uniform B-spline manifold.
    As a demonstration, we made catenoid and helicoid surfaces which are made by waving 54 paper strips (Fig.\ref{TopImage}).
    The papercraft reminds us of the isometric transformation from the catenoid to the helicoid and vice versa.
    We also provide strain estimates for paper strips with rigorous mathematical proof.
    This estimating process is a generalization of the classical beam theory of Euler-Bernoulli to a modern geometrical elasticity.
\end{abstract}

\begin{figure}[H]
    \centering
    \includegraphics[clip,width=120mm]{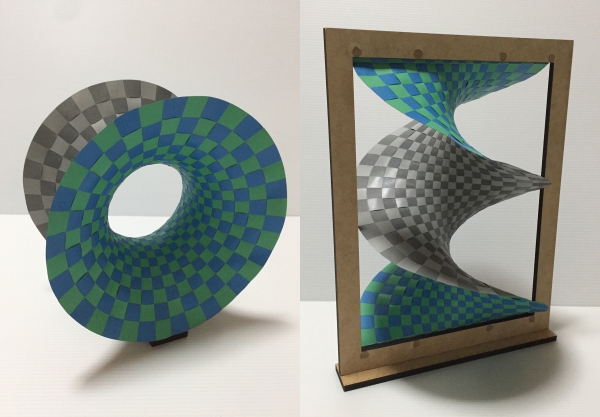}
    \caption{A catenoid and a helicoid. They were made by weaving 54 paper strips and can be deformed to each other.}
    \label{TopImage}
\end{figure}

\tableofcontents

\newpage

\section{Introduction}
A curved surface appears in various fields such as nature, science, architecture, arts, and engineering products.
One of the ever-lasting questions, especially from an engineering viewpoint, is how to create a curved geometry using a planar material.
Typical processing is a combination of two deformations, in-plane stretching and out-of-plane bending, depending on the material we use.
Perhaps, a paper sheet is the most common planar material used in everyday life.
For a given paper sheet with a sufficiently small thickness, the energetic contribution of bending deformation is negligible compared to the stretching.
In other words, the flexibility of a paper sheet is mainly due to the thin shape geometry.
The traditional Japanese art \squote{origami} uses the most geometrical flexibility.
It approximates a curved surface by a plane of zero Gaussian curvature, i.e. developable surface, using the out-of-plane plastic deformation \cite{callens_flat_2018}.
Another paper art \squote{kirigami} makes a curved surface by controlling the cuts introduced into the paper sheet \cite{callens_flat_2018, rafsanjani_propagation_2019, choi_programming_2019, hong_boundary_2022}.
These paper constructions maintain the geometrical flexibility and, therefore, have received a great deal of attention for application to soft robots including robot arm \cite{wu_stretchable_2021}, crawler \cite{rafsanjani_kirigami_2018}, gripper \cite{yang_grasping_2021}, shape morphing \cite{hong_boundary_2022}, and solar panels \cite{lamoureux_dynamic_2015}.
However, there are several problems with these crafting methods \cite{callens_flat_2018}.
For instance, a curved surface made by origami inevitably has edges and corners: it fails to make a smooth surface.
On the other hand, curved surfaces made of kirigami are not filled and have voids and gaps.
Engineering applications of origami and kirigami are limited by these geometric features.

In this study, we introduce a new method called \squote{amigami}%
\footnote{\squote{Amigami} is a Japanese word which means weaving (\ja{\ruby{編}{あ}み}) papers (\ja{\ruby{紙}{がみ}}).}
 for creating a general curved surface from a thin planar material such as a paper sheet.
Amigami is based on the concept of maximizing the geometrical flexibility of a planar material and creates a smooth surface without having edges, corners, voids or gaps.
Generally, a curved surface in the real world is expressed by a 2-dimensional Riemannian manifold embedded in 3-dimensional Euclidean space $\setE^3$.
Here the Riemannian metric, or the first fundamental form of a curved surface, represents the in-plane stretching of a flat parameter space $\setR^2$.
Similarly, the second fundamental form is related to the out-of-plane bending deformation.
The planar material is thin enough, so the strain energy of out-of-plane bending deformation is relatively smaller than the energy of in-plane deformation, so the out-of-plane deformation can be ignored.
Given this property of planar materials, a strategy of minimizing the strain energy of in-plane deformation is considered the most rational way to create curved surfaces.

Large strain energy may be required to obtain a general curved surface, and the deformation may exceed the elastic limit, leading to failure.
A possible way to reduce the in-plane elastic strain is to split the target surface into narrow strips and which are then weaved together.
This is the basic strategy of amigami to create curved surfaces, and the strain tensor and strain energy can be estimated with our theory based on elasticity on Riemannian manifolds.
Recently, Ren et. al. made a curved surface by weaving elastic strips \cite{ren_3d_2021}.
This method is similar to our theory as they incorporate mechanical force balance in the design of the strips.
The advantages of our theory over this previous work are (i) the resulting surfaces are smooth and filled without gaps, (ii) the modeling and numerical calculation are truly based on a 2-dimensional manifold, (iii) the strain approximation can be estimated, which facilitates its application to engineering design.

The construction of the paper is as follows.
In the next Section \ref{Sec-elasticity}, we provide a brief overview of the theory of elasticity on Riemannian manifolds.
In Section \ref{Sec-weaving-theory}, we develop our theory of weaving paper strips.
This theory includes modeling paper strips as Riemannian manifolds, numerical computing of its embeddings, and some approximation theorems.
In Section \ref{Sec-results}, some numerical results and papercrafts will be provided.
Section \ref{Sec-conclusion} is a brief conclusion of this paper.
Appendix \ref{Sec-Appendix-A} provides proof for the theorems provided in Section \ref{Sec-weaving-theory}.
Appendix \ref{Sec-Appendix-B} includes some papercraft kits.

\section{Overview of elasticity on Riemannian manifold}
\label{Sec-elasticity}
\subsection{Geometric modeling of elastic materials}

First of all, we explain the classification for the geometrical modeling of elastic materials using Fig.\ref{GeometricModelingClassification}.
The simplest one is (i) discrete mass point approximation where the points are connected to $\textcircled{\scriptsize 1}$ linear or $\textcircled{\scriptsize 2}$ nonlinear elastic springs.
The standard theories of continuum elasticity are established in (ii) Euclidean space, and it can be (iii) classified into $\textcircled{\scriptsize 3}$ materially linearized model and $\textcircled{\scriptsize 4}$ materially nonlinear model.
The materially linearized model assumes that the strain in the medium is small enough, but it doesn't need to assume its deformation is not small, and the equilibrium equation is still nonlinear PDE \cite{bonet_nonlinear_2008}.
(iv) The geometrically linearized model $\textcircled{\scriptsize 5}$ assumes that the deformation is also small enough, and the problem will be linear PDE.
(v) If the shape of the target object is special, some assumptions such as Euler-Bernoulli's assumption can be adapted \cite{timoshenko_history_1983}.
The class $\textcircled{\scriptsize 6}$ is geometrically linearized and can be adapted to some approximation based on its shape.
This class includes standard theories of the strength of materials such as a deflection of a beam, torsion of a bar, and deformation of a shell.
The class $\textcircled{\scriptsize 7}$ allows finite deformation, and its typical example is elastica theory \cite{Levien:EECS-2008-103}.

These continuum elasticity on Euclidean space $\textcircled{\scriptsize 3}$, $\textcircled{\scriptsize 5}$, $\textcircled{\scriptsize 6}$, and $\textcircled{\scriptsize 7}$ can be generalized to the theory on Riemannian manifold $\textcircled{\scriptsize 8}$, $\textcircled{\scriptsize 9}$, $\textcircled{\scriptsize 10}$, and $\textcircled{\scriptsize 11}$ \cite{grubic_equations_2014}.
One of the biggest benefits of the theory on the Riemannian manifold is a generalization of the metrics of the reference and the current state.
This property is useful in thermal and residual stress analysis \cite{ozakin_geometric_2010}.

\begin{figure}[H]
    \centering
    \includegraphics[page=2,clip,width=85mm]{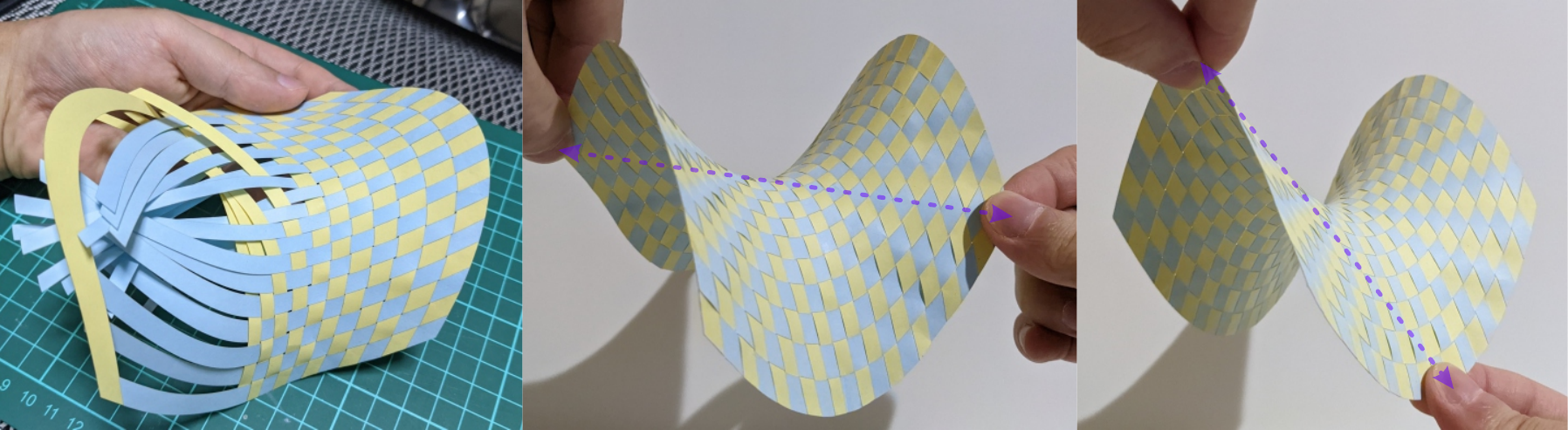}
    \caption{Classification of geometric modeling of elasticity.
        Modeling on Euclidean geometry includes
        $\textcircled{\scriptsize 1}$ linear spring model,
        $\textcircled{\scriptsize 2}$ nonlinear spring model,
        $\textcircled{\scriptsize 3}$ materially linearized model
        $\textcircled{\scriptsize 4}$ materially nonlinear model
        $\textcircled{\scriptsize 5}$ geometrically linearized model
        $\textcircled{\scriptsize 6}$ standard theory of strength of materials, 
        and $\textcircled{\scriptsize 7}$ simplified model accepting large deformation.
        These modelings can be generalized to modelings on Riemannian manifolds:
        $\textcircled{\scriptsize 8}$,
        $\textcircled{\scriptsize 9}$,
        $\textcircled{\scriptsize 10}$,
        and $\textcircled{\scriptsize 11}$.
        Our formulations in this paper are based on
        $\textcircled{\scriptsize 7}$,
        $\textcircled{\scriptsize 8}$,
        and $\textcircled{\scriptsize 11}$.
    }
    \label{GeometricModelingClassification}
\end{figure}

As we will see in the following sections, the main models addressed in this study are \textcircled{\scriptsize 7}, \textcircled{\scriptsize 8}, and \textcircled{\scriptsize 11}.
The planar material is thin enough, so the 3-dimensional Euclidean model \textcircled{\scriptsize 7} will be approximated by a 2-dimensional manifold \textcircled{\scriptsize 8}.
The numerical calculation of the deformation of the curved piece of surface is based on \textcircled{\scriptsize 8}.
If the breadth of the paper strip is small enough, another shape approximation \textcircled{\scriptsize 11} can be adapted.
This results in strain estimation (Theorem \ref{03-Thm-SA}) and initial value determination (Theorem \ref{03-Thm-GC}) of the Newton-Raphson method.

\subsection{Tensor fields on reference and current states}
From this section, we overview the theory of elasticity on Riemannian manifolds.
Although most of the mathematical expressions and formulae follow those in \cite{morita_geometry_2001, chern_lectures_2000}, and the theory on elasticity is mainly based on \cite{grubic_equations_2014}.
 \cite{marsden_mathematical_1994} and \cite{ciarlet_introduction_2005} are also good references for the elasticity theory from the geometric aspect.
Some expressions in this paper have been modified to fit our research contents.
Let $M$ be an orientable and compact $d$-dimensional manifold with a piecewise smooth boundary and let $g_{[0]}$ be a Riemannian metric on $M$.
We denote $M_{[0]} = (M, g_{[0]})$ as a reference state and $M_{[t]} = (M, g_{[t]})$ as a current state, where $g_{[t]}$ is also a Riemannian metric.
Without the loss of generality, we assume that the manifolds $M_{[0]}$ and $M_{[t]}$ are diffeomorphic.
That is, $M_{[0]}$ and $M_{[t]}$ are the same as a manifold as they share the same $M$, but different as a Riemannian manifold because their metrics are distinct.
Throughout this study, we use a chart $(U,\varphi)$ and its coordinates%
\footnote{We don't use $\curl{x^i}$ for the symbol of coordinates on $M$ to avoid confusion with $\curl{x_{[0]}^i}$ and $\curl{x_{[t]}^i}$.
These symbols are used for Euclidean space (Fig.\ref{03-Fig-MainProblem}).}
$\{u^i\}$.
The notations $\dashedph_{[0]}$ and $\dashedph_{[t]}$ represent symbols that relate reference and current states%
\footnote{The characters in $\dashedph_{\squa{0}}$ and $\dashedph_{\squa{t}}$ are inspired from time $0$ (reference state) and time $t$ (current state), but our theory in this paper is not time-dependent.
Traditionally, upper- and lower-case letters are used to represent reference and current states, but this notation is confusing on state-independent symbols such as Green's strain tensor field $E$.}
just like $M_{[0]}$, $M_{[t]}$, $g_{[0]}$, and $g_{[t]}$.
The Riemannian metrics are written with the local coordinates.
\begin{align}
    \label{Eqn-metric}
    g_{[0]}=g_{[0]ij}du^i\otimes du^j,\quad
    g_{[t]}=g_{[t]ij}du^i\otimes du^j.
\end{align}
The dual metrics are written as
\begin{align}
    \label{Eqn-dualmetric}
    g^*_{[0]}
    =g^{*ij}_{[0]}\pd{}{u^i}\otimes\pd{}{u^j},\quad
    g^*_{[t]}
    =g^{*ij}_{[t]}\pd{}{u^i}\otimes\pd{}{u^j}
\end{align}
where the following conditions hold; $g_{[0]ij}g^{*jk}_{[0]}=\delta_i^k$ and $g_{[t]ij}g^{*jk}_{[t]}=\delta_i^k$.
Similarly, $(r,s)$ type tensor field $T$ is written as
\begin{align}
    \label{02-Eqn-TensorLC}
    T
    &=T^{i^1\cdots i^r}_{j^1\cdots j^s}\pd{}{u^{i^1}}\otimes\cdots\otimes\pd{}{u^{i^r}}\otimes du^{j^1}\otimes\cdots\otimes du^{j^s}.
\end{align}
We introduce orthonormal frames on the open subset $U$ of the manifolds $M_{[0]}$ and $M_{[t]}$ by $\curl{e^{\angl{0}}_{i}}$ and $\curl{e^{\angl{t}}_{i}}$, respectively.
The dual frames are $\curl{\theta^{\angl{0}i}}$ and $\curl{\theta^{\angl{t}i}}$.
Then, the Riemannian metrics Eq.(\ref{Eqn-metric}) and Eq.(\ref{Eqn-dualmetric}) become
\begin{align}
    g_{[0]}
    &=g^{\angl{0}}_{[0]ij}\theta^{\angl{0}i}\otimes\theta^{\angl{0}j}
    =g^{\angl{t}}_{[0]ij}\theta^{\angl{t}i}\otimes\theta^{\angl{t}j}, &
    g_{[t]}
    &=g^{\angl{0}}_{[t]ij}\theta^{\angl{0}i}\otimes\theta^{\angl{0}j}
    =g^{\angl{t}}_{[t]ij}\theta^{\angl{t}i}\otimes\theta^{\angl{t}j}, \\
    g^*_{[0]}
    &=g^{*\angl{0}ij}_{[0]}e^{\angl{0}}_{i}\otimes e^{\angl{0}}_{j}
    =g^{*\angl{t}ij}_{[0]}e^{\angl{t}}_{i}\otimes e^{\angl{t}}_{j}, &
    g^*_{[t]}
    &=g^{*\angl{0}ij}_{[t]}e^{\angl{0}}_{i}\otimes e^{\angl{0}}_{j}
    =g^{*\angl{t}ij}_{[t]}e^{\angl{t}}_{i}\otimes e^{\angl{t}}_{j} .
\end{align}
Obviously, some of the coefficients of the metrics will also be Kronecker delta; $g^{\angl{0}}_{[0]ij} = g^{\angl{t}}_{[t]ij} = \delta_{ij}$ and $g^{*\angl{0}ij}_{[0]} = g^{*\angl{t}ij}_{[t]} = \delta^{ij}$.
Similarly, the tensor field $T$ given in Eq.($\ref{02-Eqn-TensorLC}$) becomes
\begin{align}
    \label{02-Eqn-TensorNF}
    \begin{aligned}
        T
        &=T^{\angl{0}i^1\cdots i^r}_{j^1\cdots j^s}e^{\angl{0}}_{i^1}\otimes\cdots\otimes e^{\angl{0}}_{i^r}\otimes \theta^{\angl{0}j^1}\otimes\cdots\otimes \theta^{\angl{0}j^s} \\
        &=T^{\angl{t}i^1\cdots i^r}_{j^1\cdots j^s}e^{\angl{t}}_{i^1}\otimes\cdots\otimes e^{\angl{t}}_{i^r}\otimes \theta^{\angl{t}j^1}\otimes\cdots\otimes \theta^{\angl{t}j^s}.
    \end{aligned}
\end{align}
Note that these symbols with the character decorations $\dashedph^{\angl{0}}$ and $\dashedph^{\angl{t}}$ are related to the orthonormal frames of the reference and current metrics%
\footnote{We don't use notations such as $\dashedph^{\squa{0}}$ and $\dashedph^{\squa{t}}$ for less confusion especially on handwriting.}.
Volume elements of the manifolds $M_{[0]}$ and $M_{[t]}$ are given by the differential $d$-form $\upsilon_{[0]},\,\upsilon_{[t]}$ such that
\begin{align}
    \upsilon_{[0]}
    &=\theta^{\angl{0}1}\wedge\cdots\wedge\theta^{\angl{0}d}
    =\sqrt{\det_{i,j}g_{[0]ij}}du^1\wedge\cdots\wedge du^d, \\
    \upsilon_{[t]}
    &=\theta^{\angl{t}1}\wedge\cdots\wedge\theta^{\angl{t}d}
    =\sqrt{\det_{i,j}g_{[t]ij}}du^1\wedge\cdots\wedge du^d.
\end{align}
Similarly, the volume forms on the boundary $\partial M$ are written by $\upsilon_{\partial[0]}$ and $\upsilon_{\partial[t]}$, respectively.

\subsection{Stress, strain, stiffness, and strain energy}
By definition, the reference state $M_{[0]}$ is free from any stress.
Let $E$ be Green's strain tensor field between the reference state $M_{[0]}$ and the current state $M_{[t]}$.
Then, the $(0,2)$-type tensor field $E$ is defined by the difference between the Riemannian metrics of the reference and the current state
\begin{align}
    \label{Eqn-def-strain}
    E
    &=\frac{1}{2}(g_{[t]}-g_{[0]}) .
\end{align}
The stiffness tensor field $C$ is a $(4,0)$-type tensor field, which defines materially-linearized strain energy and constitutive equation of materials.
\begin{align}
    C=C^{ijkl}\pd{}{u^i}\otimes\pd{}{u^j}\otimes\pd{}{u^k}\otimes\pd{}{u^l} .
\end{align}
These coefficients $C^{ijkl}$ satisfy the following major and minor symmetries
\begin{align}
    C^{ijkl}
    &=C^{klij}, &
    C^{ijkl}
    =C^{jikl}
    &=C^{ijlk}
    =C^{jilk}.
\end{align}
The stiffness tensor field $C$ induces an inner product on $\operatorname{Sym}^2(T_p^*M)$ for each point $p \in M$%
\footnote{The symmetric tensor space $\operatorname{Sym}^2(T_p^*M)$ is a linear subspace of $T_p^{(0,2)}M$.
The Green's strain tensor $E_p$ lives in the symmetric tensor space.}.
If the stiffness tensor field $C$ is isotropic, then the tensor can be characterized by two real values $(\lambda, \mu)$ at each point $p \in M$.
\begin{align}
    C^{ijkl}
    =\lambda g_{[0]}^{*{ij}} g_{[0]}^{*{kl}}+\mu\Pare{g_{[0]}^{*{ik}} g_{[0]}^{*{jl}}+g_{[0]}^{*{il}} g_{[0]}^{*{jk}}}.
\end{align}
These values $(\lambda, \mu)$ are called Lam\'{e} parameters%
and can be represented by Young's modulus $Y$ and Poisson's ratio $\nu$:
\begin{align}
    \lambda &= \frac{\nu Y}{(1+\nu)(1-(d-1)\nu)}, &
    \mu &= \frac{Y}{2(1+\nu)}
\end{align}
where $d$ is the dimensions of $M$.
The materially linearized model assumes that the 2nd Piola-Kirchhoff stress tensor field $S$ is proportional to Green's strain tensor field $E$:
\begin{align}
    S
    = C(E, \cdot)
    = C^{ijkl}E_{ij} \pd{}{u^k}\otimes \pd{}{u^l}.
\end{align}
Strain energy density $\mathcal{W}$ and strain energy $W$ are defined as
\begin{align}
    \mathcal{W}
    &=\frac{1}{2}C(E, E)\upsilon_{[0]}
    =\frac{1}{2}C^{ijkl}E_{ij}E_{kl}\sqrt{\det_{i,j}g_{[0]ij}}du^1\wedge\cdots\wedge du^d, &
    W
    &=\int_M \mathcal{W}.
\end{align}

\subsection{Equilibrium Equation}
Let $N_{[t]} = (N, h_{[t]})$ be a Riemannian manifold and $\Phi: M_{[0]} \to N_{[t]}$ be an embedding which represents the deformation of the elastic material.
Then, the Riemannian metric $g_{[t]}$ of the current state is induced by
\begin{align}
    g_{[t]} = \Phi^{*}h_{[t]}.
\end{align}
One of the most typical problems in elasticity theory is finding the equilibrium embedding $\Phi$ under some external forces.

Let $\GammaD$ be a Dirichlet boundary of $M$ and $\GammaN$ be a Neumann boundary of $M$.
These boundaries satisfies $\partial M=\Cl{\GammaD\cup\Gamma_{\text{N}}}$ and $\varnothing=\GammaD\cap\Gamma_{\text{N}}$.
We also use $\GammaD_{[0]}, \GammaN_{[0]}$ and $\GammaD_{[t]}, \GammaN_{[t]}$ to represent these boundaries on the reference and current states.
Let $f_\text{B}$ be a body force%
\footnote{In this paper, we formulated a force as a covector field. However, some studies such as \cite{grubic_equations_2014} treat them as a covector field with volume forms.
As we will see in the later section, the external forces will be treated as zero, so this will not be problematic.}
 on $M_{[t]}$, $f_\text{S}$ be a surface force on $\GammaN_{[t]}$.
Then, the weak form PDE of the equilibrium equation is written as
\begin{align}
    \label{Eqn-general-weak-form}
    \int_M {\Angl{S, \frac{\mathcal{L}_{X}g_{[t]}}{2}}}\upsilon_{[0]}
    -\int_M \Angl{f_\text{B},X}\upsilon_{[t]}
    -\int_{\GammaN} \Angl{f_\text{S},X}\upsilon_{\partial [t]}
    =0
\end{align}
where $X \in \mathfrak{X}(M)$ is a test vector field, $\mathcal{L}_{X}$ is a Lie derivative operator along with the vector field $X$, and $\angl{,}$ is a product operator between dual spaces.
In the next section, we will adapt this equilibrium equation to our weaving theory. (Proposition \ref{03-Thm-WFonLC})

\section{Theory of weaving paper strips}
\label{Sec-weaving-theory}

\subsection{Weaving methods for paper strips}
\label{Sec-weavingmethod}
Before digging into our theory, let's summarize some weaving methods to create a surface.
There are several ways to construct a surface from strip shapes.
See Fig.\ref{03-Fig-WeavingMethods} for example.
\begin{figure}[H]
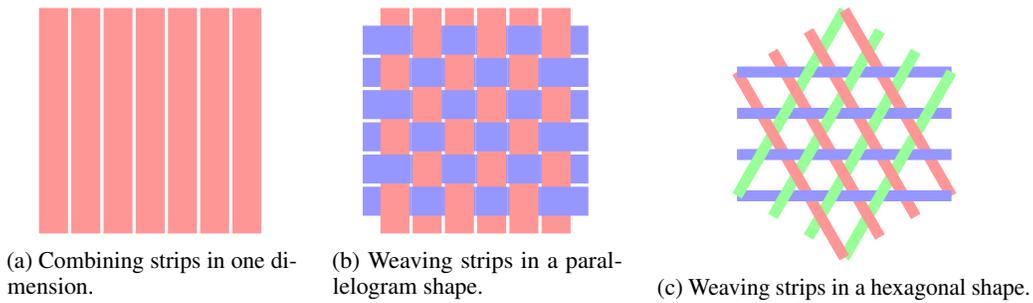

    \centering
    \begin{minipage}{0.23\hsize}
        \centering
        \includegraphics[page=6,clip,width=30mm]{ESE_paper.pdf}
        \subcaption{Combining strips in one dimension.}
        \label{weaving-a}
    \end{minipage}
    \hspace{1em}
    \begin{minipage}{0.23\hsize}
        \centering
        \includegraphics[page=7,clip,width=30mm]{ESE_paper.pdf}
        \subcaption{Weaving strips in a parallelogram shape.}
        \label{weaving-b}
    \end{minipage}
    \hspace{1em}
    \begin{minipage}{0.3\hsize}
        \centering
        \includegraphics[page=8,clip,width=30mm]{ESE_paper.pdf}
        \subcaption{Weaving strips in a hexagonal shape.}
        \label{weaving-c}
    \end{minipage}
    \caption{Several methods to construct a surface by combining strips.}
    \label{03-Fig-WeavingMethods}
\end{figure}
Each of the three methods has the following benefits and restrictions.
\begin{enumerate}[label=(\alph*)]
    \item This is the simplest method to construct a curved surface from strips \cite{mitani_making_2004, schuller_shape_2018}.
    Strictly speaking, this method is not weaving, and we need additional glue margins to assemble the paper strips.
    \item This method can achieve smoothness and higher strength as a surface compared to (a) and (c) because the strips are weaved and there are no gaps or voids.
    However, this method requires the existence of global coordinates on the target surface.
    An example of a torus is \cite{fdecomite_weaving_2015}.
    \item The gaps allow us to see the back side of the curved surface, and make it easier to weave the strips.
    The hexagon can be replaced with a pentagon or a heptagon to adapt the Gaussian curvature of the surface \cite{ren_3d_2021, vekhter_weaving_2019, ayres_beyond_2018, alison_grace_martin_alison_2013}.
\end{enumerate}
We mainly use method (b), but our proposed method can also be adapted to (a) and (c) by adding a chart to each strip.

\subsection{Modeling paper strips as 2-dimensional Riemannian manifolds}
\label{Sec-modeling-2dim}
In general, when constructing a curved surface $S$ (e.g. a hemisphere in Fig.\ref{03-Fig-HaSp0}) from a flat material such as paper, we need to divide the curved surface into smaller pieces such as $M_{[t]} \subseteq S$ (Fig.\ref{03-Fig-HaSp1}) and construct each piece from a flat material $M_{[0]}$ (Fig.\ref{03-Fig-HaSp2}).
This is because it is assumed that planar materials do not allow for in-plane large deformation, and are deformed mainly in the out-of-plane direction.
Some previous researches such as \cite{mitani_making_2004, schuller_shape_2018} assume that the planar materials deform only in the out-of-plane direction.
In the language of differential geometry, this approach assumes that the first fundamental form of the surface is invariant before and after the deformation, and each surface piece $M_{[t]}$ should be approximated by a developable surface.

\begin{figure}[H]
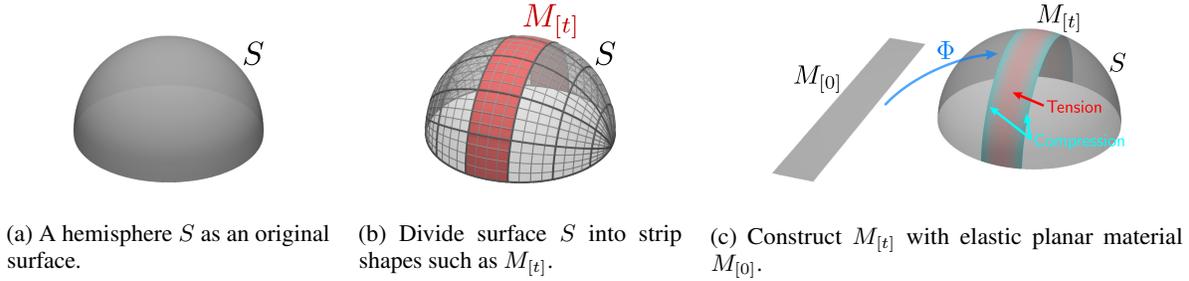

    \centering
    \begin{minipage}{0.26\hsize}
        \centering
        \includegraphics[page=3,clip,width=28mm]{ESE_paper.pdf}
        \subcaption{A hemisphere $S$ as an original surface.}
        \label{03-Fig-HaSp0}
    \end{minipage}
    \hspace{0.6em}
    \begin{minipage}{0.26\hsize}
        \centering
        \includegraphics[page=4,clip,width=28mm]{ESE_paper.pdf}
        \subcaption{Divide surface $S$ into strip shapes such as $M_{[t]}$.}
        \label{03-Fig-HaSp1}
    \end{minipage}
    \hspace{0.6em}
    \begin{minipage}{0.38\hsize}
        \centering
        \includegraphics[page=5,clip,width=50mm]{ESE_paper.pdf}
        \subcaption{Construct $M_{[t]}$ with elastic planar material $M_{[0]}$.}
        \label{03-Fig-HaSp2}
    \end{minipage}
    \caption{Creating a piece of a surface from a planar material.}
\end{figure}
However, planer material can be deformed in-plane, and the strain energy of out-of-plane deformation can be ignored if the planar material is thin enough.
In this situation, the elastic medium can be formulated as a 2-dimensional Riemannian manifold, and the strain energy $W$ with the deformation $\Phi$ should be as small as possible.
Our question in this subsection is, \emph{\dquote{How can we find the reference state $M_{[0]}$ and the deformation $\Phi$ that minimizes the strain energy $W$?}}

\subsection{Swapping reference state for current state}
As described in the previous Section \ref{Sec-modeling-2dim}, we consider the construction of a curved surface piece from a planar material (Fig.\ref{03-Fig-HaSp2}), which essentially means that the planar material is the reference state and the curved surface piece is the current state.
This can be attributed to the problem of finding a reference state that minimizes the strain energy for the predetermined current state.
However, it is counterintuitive and unwieldy to consider an elastic material with a predetermined current state, and it is more convenient to write the formulation by interchanging the reference state and the current state.
The validity of this replacement has been proved in the next Proposition \ref{03-Thm-swap}.
\begin{prop}[Swapping Reference State for Current State]
    \label{03-Thm-swap}
    Let $E=\alpha \overline{E}$ be a Green's strain tensor field with a real coefficient $\alpha$, and assume that the strain energy density $\mathcal{W}$ is given as
    \begin{align}
        \mathcal{W}=\frac{1}{2}C(E, E)\upsilon_{[0]}.
    \end{align}
    Then, the regular strain energy $W$ and the state-swapped strain energy $\hat{W}$ are equivalent under ignoring the residual term $\bigO(\alpha^3)$.
    \QEDA
\end{prop}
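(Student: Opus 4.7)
The plan is to expand both the regular strain energy $W$ and the state-swapped strain energy $\hat{W}$ as power series in $\alpha$ using a single common frame on $M$, and to verify that every contribution up through order $\alpha^2$ coincides, so that the difference sits in $\bigO(\alpha^3)$.

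First I would make the state-swapped quantities explicit. Interchanging the reference and current states in Eq.(\ref{Eqn-def-strain}) gives $\hat{E}=\frac{1}{2}(g_{[0]}-g_{[t]})=-E=-\alpha\Cl{E}$, so $\hat{E}$ is of the same order in $\alpha$ as $E$ and satisfies $\hat{C}(\hat{E},\hat{E})=\hat{C}(E,E)$ by symmetry. The stiffness tensor $\hat{C}$ in the swapped setting is defined by the same isotropic formula as $C$ with $g_{[0]}^*$ replaced by $g_{[t]}^*$ (the Lam\'e parameters $\lambda,\mu$ belong to the material and are invariant under the swap), and the volume form is $\upsilon_{[t]}$, so
\begin{align*}
    \hat{\mathcal{W}}=\frac{1}{2}\hat{C}(\hat{E},\hat{E})\upsilon_{[t]},\qquad \hat{W}=\int_M \hat{\mathcal{W}}.
\end{align*}

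Next I would work in the orthonormal frame $\curl{e^{\angl{0}}_{i}}$ of $M_{[0]}$, where $g^{\angl{0}}_{[0]ij}=\delta_{ij}$ and therefore $g^{\angl{0}}_{[t]ij}=\delta_{ij}+2\alpha\Cl{E}^{\angl{0}}_{ij}$. Standard matrix expansions then give
\begin{align*}
    \frac{\upsilon_{[t]}}{\upsilon_{[0]}}=\sqrt{\det_{i,j}\Pare{\delta_{ij}+2\alpha\Cl{E}^{\angl{0}}_{ij}}}=1+\bigO(\alpha),\qquad g^{*\angl{0}ij}_{[t]}=\delta^{ij}-2\alpha\Cl{E}^{\angl{0}ij}+\bigO(\alpha^2).
\end{align*}
The second identity, plugged into the isotropic formula for $\hat{C}$, yields $\hat{C}^{\angl{0}ijkl}=C^{\angl{0}ijkl}+\bigO(\alpha)$, while $\hat{E}^{\angl{0}}_{ij}=-E^{\angl{0}}_{ij}=\bigO(\alpha)$.

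Finally I would assemble the three expansions. Because $C(E,E)=\bigO(\alpha^2)$,
\begin{align*}
    \hat{\mathcal{W}}&=\frac{1}{2}\Squa{C(E,E)+\bigO(\alpha)\cdot\bigO(\alpha^2)}\Squa{1+\bigO(\alpha)}\upsilon_{[0]}\\
    &=\frac{1}{2}C(E,E)\upsilon_{[0]}+\bigO(\alpha^3)\upsilon_{[0]}=\mathcal{W}+\bigO(\alpha^3)\upsilon_{[0]},
\end{align*}
and integrating over $M$ gives $\hat{W}=W+\bigO(\alpha^3)$, which is the claim. The only genuinely subtle point, and the one I would treat carefully, is pinning down what \emph{\dquote{state-swapped strain energy}} means for the stiffness tensor: the components of the isotropic $C$ depend on which dual metric raises indices, and the swap forces $g_{[0]}^*\mapsto g_{[t]}^*$, producing an order-$\alpha$ correction in $\hat{C}$. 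Both this correction and the Jacobian correction from $\upsilon_{[t]}/\upsilon_{[0]}$ multiply $\hat{E}\otimes\hat{E}=\bigO(\alpha^2)$, so each is absorbed into the $\bigO(\alpha^3)$ remainder.
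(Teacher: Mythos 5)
Your proposal is correct and follows essentially the same route as the paper: both identify $\hat{E}=-E$, expand $\hat{C}\in C+\bigO(\alpha)$ and $\upsilon_{[\hat{0}]}\in\upsilon_{[0]}+\bigO(\alpha)$, and absorb the order-$\alpha$ corrections multiplying the $\bigO(\alpha^2)$ factor $\hat{C}(\hat{E},\hat{E})$ into the $\bigO(\alpha^3)$ remainder. The only difference is that you justify the two expansions explicitly (via the determinant and inverse-metric expansions in the orthonormal frame of $M_{[0]}$), which the paper simply asserts.
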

The next Fig.\ref{03-Fig-swap} illustrates the swapping of the reference and current states with the deformation $\hat{\Phi} = \Phi^{-1}$.
\begin{figure}[H]
    \centering
    \includegraphics[page=10,clip,width=80mm]{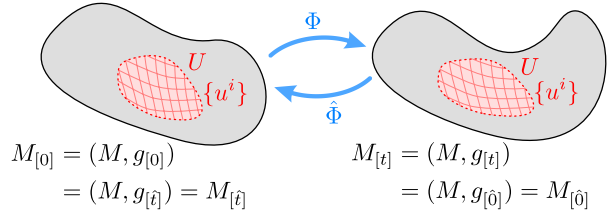}
    \caption{Swapping states; original states ($M_{[0]}, M_{[t]}$) and swapped states ($M_{[\hat{0}]}, M_{[\hat{t}]}$).}
    \label{03-Fig-swap}
\end{figure}
\begin{proof}
    Let $M_{[0]}=(M,g_{[0]}), M_{[t]}=(M,g_{[t]})$ be the regular reference and current state.
    Let $M_{[\hat{0}]}=(M, g_{[\hat{0}]}), M_{[\hat{{\vphantom{t}t}}]}=(M, g_{[\hat{{\vphantom{t}t}}]})$ be the swapped reference state and current state.
    They are just swapped each other, so $g_{[\hat{0}]}=g_{[t]}, g_{[\hat{{\vphantom{t}t}}]}=g_{[0]}$, $M_{[\hat{0}]}=M_{[t]}, M_{[\hat{{\vphantom{t}t}}]}=M_{[0]}$ are satisfied.
    The original strain energy $W$ of the deformation $\Phi: M_{[0]} \to M_{[t]}$ is given by
    \begin{align}
        W
        = \int_M \mathcal{W}
        = \int_M \frac{1}{2}C(E, E)\upsilon_{[0]}
        = \frac{1}{2}\alpha^2\int_M C(\overline{E}, \overline{E})\upsilon_{[0]}.
    \end{align}
    The swapped strain energy $\hat{W}$ of the deformation $\hat{\Phi}: M_{[\hat{0}]} \to M_{[\hat{t}]}$ is calculated straightforward by their definition as
    \begin{align}
        g_{[t]}
        &= g_{[0]}+2\alpha \overline{E}
        \in g_{[0]}+\bigO(\alpha), \\
        \hat{E}
        &= \frac{1}{2}\pare{g_{[\hat{{\vphantom{t}t}}]}-g_{[\hat{0}]}}
        =-\frac{1}{2}\pare{g_{[t]}-g_{[0]}}
        =-E, \\
        \upsilon_{[\hat{0}]}
        &\inO\upsilon_{[0]}+\bigO(\alpha), \\
        \hat{C}
        &\inO C+\bigO(\alpha), \\
        \hat{\mathcal{W}}
        &= \frac{1}{2}\hat{C}(\hat{E}, \hat{E})\upsilon_{[\hat{0}]}
        \in\frac{1}{2}\alpha^2C(\overline{E}, \overline{E})\upsilon_{[0]}+\bigO(\alpha^3)
        =\mathcal{W}+\bigO(\alpha^3), \\
        \hat{W}
        &=\int_M \hat{\mathcal{W}}
        \in\int_M \mathcal{W}+\bigO(\alpha^3)
        =W+\bigO(\alpha^3).
    \end{align}
    Thus, the strain energy $W$ and $\hat{W}$ are equivalent if the residual term $\bigO(\alpha^3)$ is ignored\footnotemark.
\end{proof}
\footnotetext{We treat Landau's notation $\bigO(f)$ as a function space.
This is sometimes useful because we can use $\in$ and $\subseteq$ for strict evaluations.}
The material linearization (Fig.\ref{GeometricModelingClassification}) is an approximation that ignores the residual term $\bigO(\alpha^3)$.
The next Fig.\ref{03-Fig-swap2} shows how Proposition \ref{03-Thm-swap} works for elastic surface embedding.
\begin{figure}[H]
    \centering
    \includegraphics[page=12,clip,width=110mm]{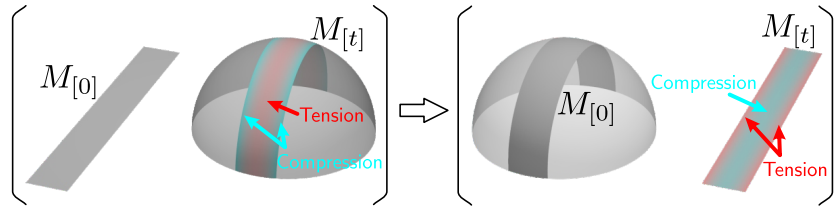}
    \caption{Swapping states; the current state $M_{[t]}$ is now a planar shape.}
    \label{03-Fig-swap2}
\end{figure}
Note that the tension part and compression part are also swapped, but their energies are equivalent under material linearization.

As discussed in Section \ref{Sec-weavingmethod}, we can assume that the domain $D$ of coordinates can be regarded as a rectangular shape
\begin{align}
    \label{Eqn-D}
    D
    &= I \times [c-\domainbreadth, c+\domainbreadth] \\
    \label{Eqn-M0}
    M_{[0]}
    &= (M, g_{[0]})
    = \Set{\bm{p}_{[0]}(u^1, u^2) | (u^1,u^2) \in D} \\
    \label{Eqn-Mt}
    M_{[t]}
    &= (M, g_{[t]})
    = \Set{\bm{p}_{[t]}(u^1, u^2) | (u^1,u^2) \in D}
\end{align}
where $\bm{p}_{[0]}$%
\footnote{We use boldface characters for symbols that live in Euclidean spaces $\setE^2$ and $\setE^3$.}.
 is the parametric mapping%
\footnote{The true domain of $\bm{p}_{[0]}$ is larger than $D$, but our main interest is each embedding of the piece of surface.}%
 to the surface $S \subseteq \setE^3$; $\bm{p}_{[0]} : D \to S$%
Similarly for the current state mapping $\bm{p}_{[t]} : D \to \setE^2$.
The second equalities in Eq.(\ref{Eqn-M0}) and Eq.(\ref{Eqn-Mt}) are not strictly true from a set-theoretic point of view, but we equate them for convenience.
However, we distinguish tangent vectors $\bm{p}_{[0]i} = \spd{\bm{p}_{[0]}}{u^i}$, $\bm{p}_{[t]i} = \spd{\bm{p}_{[t]}}{u^i}$, and $\spd{}{u^i}$.
Note that these tangent vectors have some relationships such as
\begin{align}
    \norm{\bm{p}_{[0]i}}
    &= \Norm{\pd{}{u^i}}_{[0]}, &
    g_{[t]ij}
    &= g_{[t]}\Pare{\pd{}{u^i},\pd{}{u^j}}
    = \bm{p}_{[t]i}\cdot\bm{p}_{[t]j}, &
    \bm{p}_{[t]i}
    &= \Phi_{*}\bm{p}_{[0]i}.
\end{align}
The following Fig.\ref{03-Fig-MainProblem} is a schematic diagram of the swapped states and their chart.
\begin{figure}[H]
    \centering
    \includegraphics[page=13,clip,width=115mm]{ESE_paper.pdf}
    \caption{Elastic embedding $\Phi: M_{[0]} \to \setE^2$ and their chart and parametrization.}
    \label{03-Fig-MainProblem}
\end{figure}
Intuitively, the mapping $\Phi:M_{[0]}\to \setE^2$ is the unknown mapping, but in the numerical computing aspect, we need to find the unknown mapping $\bm{p}_{[t]}:D\to \setE^2$.

\subsection{Determination of the breadth of the strip shape}
If we take a coarser division of a surface $S$ such as in Fig.\ref{03-Fig-PartitionCoarse}, then the strain on the material is expected to be larger.
Therefore it is considered that it is better to divide the curved surface into smaller pieces such as in Fig.\ref{03-Fig-PartitionFine}.

\begin{figure}[H]
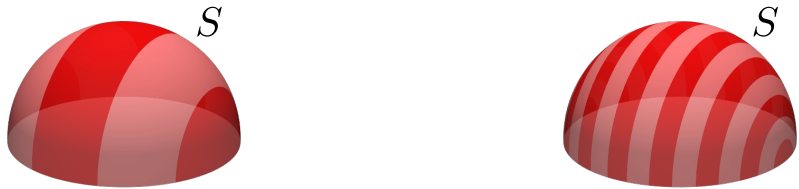

    \centering
    \begin{minipage}{0.4\hsize}
        \centering
        \includegraphics[page=15,clip,width=35mm]{ESE_paper.pdf}
        \subcaption{A coarse division of the surface $S$.
        The strain in the strip shape will be large.}
        \label{03-Fig-PartitionCoarse}
    \end{minipage}
    \qquad
    \begin{minipage}{0.4\hsize}
        \centering
        \includegraphics[page=14,clip,width=35mm]{ESE_paper.pdf}
        \subcaption{A fine division of the surface $S$.
        The assembly process will be hard.}
        \label{03-Fig-PartitionFine}
    \end{minipage}
    \caption{A coarse division and a fine division of the surface $S$.}
    \label{03-Fig-PartitionCoarseFine}
\end{figure}

However, it is not sufficient to divide the surface into as many parts as possible.
This is because the number of parts increases with the number of divisions, and thus the assembly of strips into the surface becomes more complicated.
Therefore, it is very important to know the proper division of the curved surface $S$ before assembling the pieces of the surface.
The following strain approximation formula is useful here.
\begin{thm}[Approximation of Strain]
    \label{03-Thm-SA}
    In the range of sufficiently small breadth $B$ of the curved piece, the piece is in an approximately $u^1$-directional uniaxial stress state at each point, and the principal strain can be approximated as
    \begin{align}
        \label{03-Eqn-SA}
            E^{\angl{0}}_{11}
            &\approx\frac{1}{2}K_{[0]}B^2\Pare{r^2-\frac{1}{3}}, &
            E^{\angl{0}}_{22}
            &\approx -\nu E^{\angl{0}}_{11}
    \end{align}
    where $K_{[0]}$ is the Gaussian curvature along the center curve $C_{[0]}$ of the reference state $M_{[0]}$,  $r$ is a normalized breadth-directional coordinate ($-1\le r \le 1$).
    \QEDA
\end{thm}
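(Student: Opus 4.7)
The plan is to reduce the theorem to a one-dimensional beam-like problem by combining a Fermi-coordinate expansion of the reference metric along the center curve $C_{[0]}$ with a Saint--Venant-type uniaxial-stress reduction and a zero-net-axial-force condition of Euler--Bernoulli type.

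First, I would adapt the chart so that $u^1$ is arclength along $C_{[0]}$ and $v := u^2 - c \in [-B, B]$ is the signed geodesic normal distance from $C_{[0]}$, i.e.\ Fermi coordinates along $C_{[0]}$. Setting $J := \sqrt{g_{[0]11}}$ and invoking the Jacobi equation $\partial_v^2 J + K_{[0]}\,J = 0$ along each normal geodesic with initial data $J(u^1, 0) = 1$ and $\partial_v J(u^1, 0) = 0$ (the latter encoding that $C_{[0]}$ is a geodesic at the approximation order we need, with any residual geodesic curvature absorbable into the planar embedding below), I would obtain the standard expansion $g_{[0]11}(u^1, v) = 1 - K_{[0]}(u^1)\,v^2 + \bigO(v^3)$, $g_{[0]22} = 1$, $g_{[0]12} = 0$. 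Substituting $v = Br$ makes the metric perturbation explicitly of order $B^2$, and because $g_{[0]11} \to 1$ the orthonormal-frame and coordinate components of $E$ agree at this order.

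Second, I would establish the uniaxial-stress half of the claim. The lateral edges $v = \pm B$ are traction-free Neumann boundaries, so $S^{\angl{0}22}$ and $S^{\angl{0}12}$ vanish there; a Saint--Venant argument for a sufficiently narrow strip propagates this into the interior, yielding $S^{\angl{0}22},\ S^{\angl{0}12} = \bigO(B^3)$ uniformly. Inverting the isotropic constitutive relation with $d = 2$ under $S^{\angl{0}22} = 0$ then produces $E^{\angl{0}}_{22} = -\nu E^{\angl{0}}_{11}$ and $E^{\angl{0}}_{12} = 0$, which is the second identity of Eq.(\ref{03-Eqn-SA}), and collapses the axial stress-strain law to $S^{\angl{0}11} = Y E^{\angl{0}}_{11}$ (since $\lambda(1 - \nu) + 2\mu = Y$ in $d = 2$).

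Third, I would compute $E^{\angl{0}}_{11}$ itself. Parametrize the planar embedding near the center curve as $\bm{p}_{[t]}(u^1, v) = \bm{R}(u^1) + v\,\bm{N}(u^1) + \bigO(v^2)$ with $\bm{R}$ the image of $C_{[0]}$ in $\setE^2$ and $\bm{N}$ its unit normal. The $v \mapsto -v$ reflection symmetry of both the strip and the reference metric forces the equilibrium embedding to be symmetric too, eliminating odd-in-$v$ (bending) contributions at this order, and a direct computation of $g_{[t]11} = |\bm{R}'|^2 + \bigO(v)$ yields $E^{\angl{0}}_{11}(u^1, r) = \epsilon_0(u^1) + \tfrac{1}{2} K_{[0]}(u^1)\,B^2 r^2 + \bigO(B^3)$, where $\epsilon_0(u^1)$ is a residual axial stretch of the center curve. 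Plugging this into the weak-form equilibrium Eq.(\ref{Eqn-general-weak-form}) with test field $X = \pd{}{u^1}$ under vanishing body and surface forces produces the cross-sectional constraint $\int_{-1}^{1} E^{\angl{0}}_{11}\,dr = 0$, which pins $\epsilon_0(u^1) = -\tfrac{1}{6} K_{[0]}(u^1)\,B^2$ and gives the desired $E^{\angl{0}}_{11} = \tfrac{1}{2} K_{[0]} B^2 (r^2 - \tfrac{1}{3})$.

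The main obstacle I expect is making the Saint--Venant step quantitative enough to guarantee $S^{\angl{0}22},\ S^{\angl{0}12} = \bigO(B^3)$ uniformly (rather than merely at the lateral boundaries), and cleanly absorbing any residual geodesic curvature of $C_{[0]}$ into the planar image $\bm{R}$ so that it does not contaminate the clean $r^2 - \tfrac{1}{3}$ distribution that drops out of the neutral-axis condition.
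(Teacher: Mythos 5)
Your route is genuinely different from the paper's: the paper never isolates a uniaxial-stress ansatz or a neutral-axis condition, but instead expands the total strain energy $W(\beta)$ in the breadth parameter $\beta$ and lexicographically minimizes the Taylor coefficients $a_0, a_1, a_3, a_5, \dots$ (justified by Lemma \ref{TheLemma}), from which the uniaxiality of $S$, the relation $E^{\angl{0}}_{22}\approx-\nu E^{\angl{0}}_{11}$, and the $r^2-\tfrac{1}{3}$ profile all fall out simultaneously as properties of the minimizer. Your beam-theoretic reduction is closer in spirit to the classical Euler--Bernoulli derivation and is more transparent, but two of its steps have genuine gaps.

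First, your metric expansion is wrong in the generality the theorem requires. The center curve of the strip is the middle of the strip, not a geodesic of $M_{[0]}$, so $\partial_v J(u^1,0) = -\kappa_{[0]}(u^1)$ rather than $0$; the paper's part (a) obtains $g_{[0]11} = 1 - 2\kappa_{[0]}v + (\kappa_{[0]}^2 - K_{[0]})v^2 + \mathcal{O}(v^3)$. The odd linear term destroys the $v\mapsto -v$ reflection symmetry you invoke to kill bending contributions, so you cannot eliminate the odd-in-$r$ term proportional to $(\kappa_{[0]}-\kappa_{[t]})Br$ in $E^{\angl{0}}_{11}$ by symmetry: you must prove $\kappa_{[t]} = \kappa_{[0]} + \mathcal{O}(B^2)$, which is exactly Theorem \ref{03-Thm-GC} and which the paper proves simultaneously from the same minimization. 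In your framework this requires a second cross-sectional condition, the vanishing bending moment $\int_{-1}^{1} E^{\angl{0}}_{11}\, r\, dr = 0$, alongside the force balance $\int_{-1}^{1} E^{\angl{0}}_{11}\, dr = 0$; you only state the latter. (Also, to localize the force balance in $u^1$ you need test fields $f(u^1)\,\partial/\partial u^1$ for arbitrary $f$, not the single field $\partial/\partial u^1$.) Second, the Saint--Venant step that upgrades the traction-free condition $S^{\angl{0}2j}=0$ on $u^2=\pm B$ to $S^{\angl{0}2j}=\mathcal{O}(B^3)$ in the interior is asserted rather than proved; you flag it yourself, but it carries the entire uniaxiality claim, and the natural bootstrap (integrating the equilibrium equation across the breadth using an a priori bound on $S^{\angl{0}11}$ that itself comes from the strain estimate you are trying to establish) must be set up carefully to avoid circularity. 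The paper sidesteps both issues by extracting everything from the energy expansion.
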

The proof of this theorem will be given in the later Section \ref{03-Sec-Proof}.
The next Fig.\ref{03-Fig-SA} shows how Theorem \ref{03-Thm-SA} works.
\begin{figure}[H]
    \centering
    \includegraphics[page=20,clip,width=115mm]{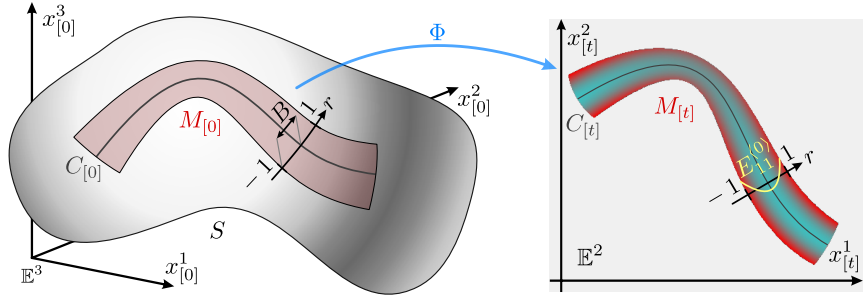}
    \caption{Strain approximation with Theorem \ref{03-Thm-SA}.}
    \label{03-Fig-SA}
\end{figure}
The center curve $C_{[0]}$ is a submanifold of the reference state $M_{[0]}$ and equips the Riemannian metric induced from $g_{[0]}$ to the curve $C$.
Here $C$ is a 1-dimensional manifold\footnote{The letter $C$ may be confusing with the symbol for the stiffness tensor field, but these can be distinguished in context.} defined by
\begin{align}
    C &= \set{\varphi^{-1}(u^1, u^2) | u^1 \in I, u^2 = c}.
\end{align}
The breadth $B$ can be estimated by
\begin{align}
    B
    \approx \domainbreadth \cdot \bm{p}_{[0]2} \cdot \bm{e}_{[0]2}
\end{align}
Where $\domainbreadth$ is the breadth parameter of the domain $D$ defined in Eq.(\ref{Eqn-D}), and $\bm{e}_{[0]2}$ is a breadth-directional unit vector on $C_{[0]}$.
Empirically, the strain $E^{\angl{0}}_{11}$ should be less than $0.01$.

\subsection{Weak form of the problem}
In this section, we will provide the equilibrium equation for paper strips.
\begin{prop}[Weak Form PDE on Local Coordinates]
    \label{03-Thm-WFonLC}
    The weak form equilibrium equation for the embedding of a surface piece $M_{[0]}$ into the Euclidean space $\setE^2$ is represented in the local coordinates as follows.
    \begin{align}
        \label{Eqn-weak}
        \delta_{rs}\int_M C^{ijkl}
        \Pare{\delta_{pq}\pd{x_{[t]}^p}{u^i}\pd{x_{[t]}^q}{u^j}-g_{[0]ij}}
        \pd{\xi^r}{u^k}
        \pd{x_{[t]}^s}{u^l}
        \upsilon_{[0]}
        = 0
    \end{align}
    where $\curl{x^i_{[t]}}$ is the standard coordinates of the space $\setE^2$ to be embedded, and $\curl{\xi^i}$ are test functions.
    \QEDA
\end{prop}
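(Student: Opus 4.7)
The plan is to start from the general weak form \eqref{Eqn-general-weak-form}, set the external forces to zero (as the accompanying footnote permits for our setting), and unwind every object in the local coordinates $\curl{u^i}$ of $M$ and the standard coordinates $\curl{x^i_{[t]}}$ of $\setE^2$. The pairing $\Angl{S,\tfrac12\mathcal{L}_X g_{[t]}}$ becomes $\tfrac12 S^{kl}(\mathcal{L}_X g_{[t]})_{kl}$; substituting the constitutive relation $S^{kl}=C^{ijkl}E_{ij}$ together with Green's strain $E_{ij}=\tfrac12(g_{[t]ij}-g_{[0]ij})$ and the pullback identity $g_{[t]ij}=\delta_{pq}\spd{x_{[t]}^p}{u^i}\spd{x_{[t]}^q}{u^j}$ — which holds because $\setE^2$ carries the flat metric $\delta_{pq}$ in the standard coordinates — immediately produces the bracketed factor $\Pare{\delta_{pq}\pd{x_{[t]}^p}{u^i}\pd{x_{[t]}^q}{u^j}-g_{[0]ij}}$ appearing in \eqref{Eqn-weak}.

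Next I would expand the Lie derivative in coordinates,
\begin{align*}
(\mathcal{L}_X g_{[t]})_{kl}
=X^m\pd{g_{[t]kl}}{u^m}+g_{[t]ml}\pd{X^m}{u^k}+g_{[t]km}\pd{X^m}{u^l},
\end{align*}
insert the pullback expression for $g_{[t]kl}$, and recognize the total-derivative pattern $X^m\partial_m\partial_k x_{[t]}^r+(\partial_k X^m)\,\partial_m x_{[t]}^r=\partial_k(X^m\partial_m x_{[t]}^r)$. Defining the test functions $\xi^r:=X^m\spd{x_{[t]}^r}{u^m}$ — the components of the pushforward $\Phi_{*}X$ in the standard frame of $\setE^2$ — the four resulting summands regroup into
\begin{align*}
\tfrac12(\mathcal{L}_X g_{[t]})_{kl}
=\tfrac12\,\delta_{rs}\Pare{\pd{\xi^r}{u^k}\pd{x_{[t]}^s}{u^l}+\pd{x_{[t]}^r}{u^k}\pd{\xi^s}{u^l}}.
\end{align*}
Plugging this back into the weak form and invoking the minor symmetry $C^{ijkl}=C^{ijlk}$ (with a relabeling $k\leftrightarrow l$ on the second summand) makes the two bracketed terms contribute equally; they collapse to the single term $\delta_{rs}\pd{\xi^r}{u^k}\pd{x_{[t]}^s}{u^l}$, and the residual overall constant is irrelevant under the vanishing condition, yielding \eqref{Eqn-weak} exactly.

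Finally, because $\Phi$ is an embedding with $\dim M=\dim\setE^2=2$, the Jacobian $\squa{\spd{x_{[t]}^r}{u^k}}$ is pointwise invertible, so the map $X\mapsto\xi=(\xi^1,\xi^2)$ is a bijection between $\mathfrak{X}(M)$ and pairs of smooth functions on $M$; arbitrariness of the test vector field thus transfers to arbitrariness of $\xi$, which is the natural test space when the unknowns are the components $x_{[t]}^s$ of $\bm{p}_{[t]}$. The main obstacle is the Lie-derivative step: one has to willingly reorganize the four terms arising from the expansion — two from $X^m\partial_m g_{[t]kl}$ and two from $g_{[t]}\cdot\partial X$ — into two Leibniz products of the form $\partial_k\xi^r$, after which the minor symmetry of $C$ does the remaining collapsing and every other manipulation is routine index bookkeeping.
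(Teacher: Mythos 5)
Your proposal is correct and follows essentially the same route as the paper: drop the external forces in Eq.(\ref{Eqn-general-weak-form}), express $g_{[t]}$ via the pullback of the flat metric, identify the test functions $\xi^r$ as the components of $X$ in the standard frame of $\setE^2$ (equivalently $\xi^r = X^m\,\spd{x_{[t]}^r}{u^m}$), and use the symmetry of $C$ to collapse the symmetrized Lie-derivative term. The only cosmetic difference is that the paper evaluates $\mathcal{L}_X g_{[t]}$ directly in the flat coordinates $\curl{x_{[t]}^i}$ (where the $X^m\partial_m g$ term vanishes) and then transforms to $\curl{u^i}$, whereas you expand in $\curl{u^i}$ and regroup via the Leibniz rule — the two computations yield the identical expression, and your closing remark on the invertibility of the Jacobian justifying the arbitrariness of $\xi$ is a point the paper leaves implicit.
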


\begin{proof}
In the equilibrium equations of weak form Eq.(\ref{Eqn-general-weak-form}), we can treat the external forces $f_\text{B}$ and $f_\text{S}$ as zero.
And also, there is no Dirichlet boundary, and the whole boundary is the Neumann boundary
\begin{align}
    \GammaD &= \varnothing,
    &
    \GammaN &= \partial M.
\end{align}
Therefore, for any test function $X \in \mathfrak{X}(M)$, we have
\begin{align}
    \int_{M} C\Pare{E,\frac{\mathcal{L}_X g_{[t]}}{2}}\upsilon_{[0]}
    =0.
\end{align}
The Riemannian metric $g_{[t]}$ and the vector field $X$ can be represented on the local coordinates $\curl{u^i}$ and $\curl{x_{[t]}^i}$.
\begin{align}
    g_{[t]}
    &=g_{[t]ij}du^i\otimes du^j
    =\delta_{ij}dx_{[t]}^i\otimes dx_{[t]}^j, &
    X
    &=X^i\pd{}{u^i}
    =\xi^i\pd{}{x_{[t]}^i}.
\end{align}
Then, the Lie derivative of the Riemannian metric is also represented on the local coordinates
\begin{align}
    \mathcal{L}_X g_{[t]}
    &=\Pare{\delta_{kj}\pd{\xi^k}{x_{[t]}^i} + \delta_{ki}\pd{\xi^k}{x_{[t]}^j}} dx_{[t]}^i \otimes dx_{[t]}^j
    =\Pare{\delta_{kj}\pd{\xi^k}{u^k}\pd{x_{[t]}^j}{u^l} + \delta_{ki}\pd{\xi^k}{u^l}\pd{x_{[t]}^i}{u^k}} du^k \otimes du^l.
\end{align}
Finally, we obtain the weak form on the local coordinates
\begin{align}
    \delta_{rs}\int_M C^{ijkl}
    \Pare{\delta_{pq}\pd{x_{[t]}^p}{u^i}\pd{x_{[t]}^q}{u^j}-g_{[0]ij}}
    \pd{\xi^r}{u^k}
    \pd{x_{[t]}^s}{u^l}
    \upsilon_{[0]}
    = 0
\end{align}
where $\curl{\xi^i}$ are arbitrary functions.
\end{proof}
In most cases, the solution of the weak form cannot be obtained analytically, and we need some discretization.
In the next two subsections, we will discuss discretization and solving the discretized equations numerically.

\subsection{Approximation of the current state with B-spline surface}
B-spline is a mathematical tool for geometric shape representation \cite{cohen_geometric_2001, schumaker_spline_2007, piegl_nurbs_1997} in affine space.
It can be regarded as a generalization of B\'{e}zier curve and surface.
With B-spline, we can create a $C^{p-1}$-class smooth mapping from $d$-dimensional rectangular region, with piecewise polynomial of degree $p$.
Let $N^I$ be a 2-dimensional B-spline basis function\footnote{The function $N^I$ is defined as
\begin{align}
    N^I(u^1,u^2)
    = \Pare{B_{(i^1,p^1,k^1)} \otimes B_{(i^2,p^2,k^2)}}(u^1,u^2).
    = B_{(i^1,p^1,k^1)}(u^1) B_{(i^2,p^2,k^2)}(u^2)
\end{align}
where $B_{(i, p, k)}$ is a B-spline basis function with index $i$, polynomial degree $p$ and knot vector $k$, and $I=(i^1,i^2)$ is a cartesian index.
See \cite{yuto_horikawa_2022_7109517} for our definitions and notations for B-spline.
For polynomial degrees $p^1, p^2 \le 3$ must be satisfied because we will export the computed embedding $\tilde{M}_{[t]}$ as SVG format.
In the context of the Galerkin method, the specific definition of $N^I$ is not very important, so we mainly use $N^I$ instead of $B_{(i^1,p^1,k^1)} \otimes B_{(i^2,p^2,k^2)}$.
} with index $I$
\begin{align}
    N^I:D &\to \setR, &
    \sum_I & N^I(u^1, u^2) = 1
\end{align}
where $D$ is a rectangular domain defined by Eq.(\ref{Eqn-D}).
With these functions $N^I$, the unknown mapping $\bm{p}_{[t]}$ can be approximated by the following $\tilde{\bm{p}}_{[t]}$.
\begin{align}
    \tilde{\bm{p}}_{[t]}(u^1, u^2) &= \bm{a}_I N^I(u^1, u^2), &
    \label{Eqn-approx-x}
    \tilde{x}_{[t]}^i &= a^i_I N^I, &
    \bm{a}_I &= \begin{pmatrix} a^1_I \\ a^2_I \end{pmatrix}.
\end{align}
We use upper-case indices such as $I$ for function sequence, and lower-case indices such as $i$ for geometric dimension.
And, we also use the Einstein summation convention for both of these indices.
Note that the character decoration $\tilde{\dashedph}$ represents that the symbol is related to B-spline approximation.
The next Fig.\ref{03-Fig-bspline} shows the approximated current state $\tilde{M}_{[t]}$ as B-spline surface.
\begin{figure}[H]
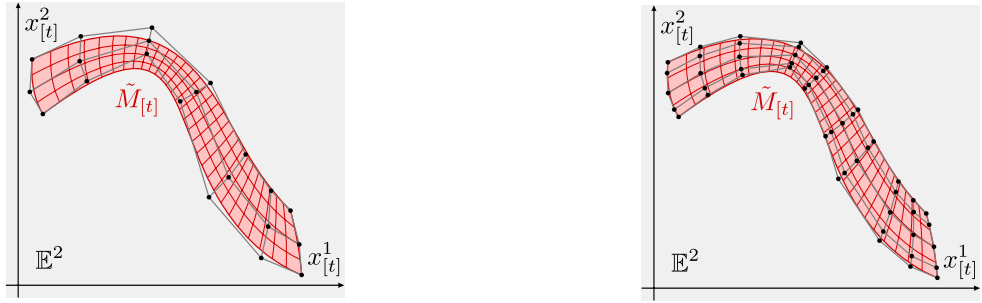

    \centering
    \begin{minipage}{0.48\hsize}
        \centering
        \includegraphics[page=19,clip,width=45mm]{ESE_paper.pdf}
        \subcaption{Discretize and approximate current state $M_{[t]}$ with B-spline manifold $\tilde{M}_{[t]}$.}
        \label{03-Fig-bspline}
    \end{minipage}
    \hspace{1em}
    \begin{minipage}{0.48\hsize}
        \centering
        \includegraphics[page=24,clip,width=45mm]{ESE_paper.pdf}
        \subcaption{Refinement operation; increase number of control points for more approximation accuracy.}
        \label{03-Fig-refinement}
    \end{minipage}
    \caption{Approximation with B-spline manifold.}
    \label{Fig-bspline}
\end{figure}
Roughly speaking, the approximation performance of the B-spline depends on the number of control points.
If one needs more precision, refinement of B-spline manifold $\tilde{M}_{[t]}$ can be computed.
This operation increases the number of control points without changing the shape (Fig.\ref{03-Fig-refinement}).
There are two types of refinement; $p$-refinement and $h$-refinement.
$p$-refinement increases the degrees of the piecewise polynomials, and $h$-refinement increases the number of knots.

\subsection{Galerkin method}
The Galerkin method is a method to obtain an approximated solution from weak form PDE.
By replacing $x_{[t]}$ in Eq.(\ref{Eqn-weak}) with $\tilde{x}_{[t]}$ in Eq.(\ref{Eqn-approx-x}), the PDE can be discretized to the following nonlinear simultaneous equations with unknown variables $\curl{a^i_I}$.
\begin{gather}
    \label{03-Eqn-NonLinearEqn}
    a^j_Ja^k_Ka^l_L\delta_{ij}\delta_{kl}A^{IJKL}-a^j_J\delta_{ij}B^{IJ}=0 \\
    A^{IJKL}=\int_M C^{ijkl} {{N^I_i}{N^J_j}{N^K_k}{N^L_l}}\upsilon_{[0]} \\
    B^{IJ}=\int_M C^{ijkl} {{N^I_i}{N^J_j}g_{[0]kl}}\upsilon_{[0]}
\end{gather}
where $N^I_i$ is the derivative of basis function, defined by $N^I_i = \spd{N^I}{u^i}$.
Here, we put functions $F = \curl{F^I_i}$ as
\begin{align}
    \label{03-Eqn-nse}
    F^I_i
    =a^j_Ja^k_Ka^l_L\delta_{ij}\delta_{kl}A^{IJKL}-a^j_J\delta_{ij}B^{IJ}.
\end{align}
Now, the problem is finding $a = \curl{a^i_I}$ that satisfies $F(a)=0$.

\subsection{Newton-Raphson method}
The Newton-Raphson method is a recursion formula to obtain a local solution of nonlinear smooth simultaneous equations.
By using this method, the nonlinear simultaneous equations Eq.(\ref{03-Eqn-NonLinearEqn}) can be solved numerically by the following formula
\begin{align}
    \label{Eqn-Newton}
    \newton{\nu+1}{a_I^i}
    = \newton{\nu}{a_I^i}-\newton{\nu}{\pd{a_I^i}{F^J_j}}\newton{\nu}{F^J_j}.
\end{align}
We denote coordinates of $\nu$-th iterated control points ${a_I^i}$ as $\newton{\nu}{a_I^i}$, and also $\newton{\nu}{F^I_i}=F^I_i(\newton{\nu}{a})$ with character decoration $\newton{\nu}{\dashedph}$%
\footnote{This notation is useful especially when the symbol has superscripts and subscripts.}.
However, the determination of the initial values $\curl{\newton{0}{a_I^i}}$ of the Newton-Raphson method is not obvious.
We will discuss this in the next section.

\subsubsection{Determination of initial values}
The next embedding approximation theorem allows us to compute an approximate embedding, which can be used to determine the initial value of the Newton-Raphson method.
\begin{thm}[Approximation of Embedding]
    \label{03-Thm-GC}
    
    Let $C_{[0]}$ be the center curve of  $M_{[0]}$, $\kappa_{[0]}$ be its geodesic curvature, $B$ be the breadth from center curve of $M_{[0]}$.
    Similarly, let $C_{[t]}$ be the center curve of $M_{[t]}$, $\kappa_{[t]}$ be its planer curvature.
    If the breadth $B$ is sufficiently small, then the following approximation is satisfied.
    \begin{align}
        g_{[t]}|_{C}&\approx g_{[0]}|_{C} \\
        \kappa_{[t]}&\approx\kappa_{[0]}
    \end{align}
    \QEDA
\end{thm}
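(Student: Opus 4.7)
The plan is to derive both approximations as direct consequences of the Approximation of Strain (Theorem \ref{03-Thm-SA}) already proved in this section. That theorem controls the Green strain tensor $E$ throughout the strip to leading order $B^2$, and since the present claims are asserted only at leading order in $B$, its output should be enough input for both assertions.

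For $g_{[t]}|_C \approx g_{[0]}|_C$, I would simply use the identity $g_{[t]}-g_{[0]}=2E$ and evaluate along the center curve, i.e.\ at $r=0$. Theorem \ref{03-Thm-SA} gives $E^{\angl{0}}_{11}\bigr|_{r=0}=-\tfrac{1}{6}K_{[0]}B^2$ and $E^{\angl{0}}_{22}\bigr|_{r=0}=\tfrac{\nu}{6}K_{[0]}B^2$, while the shear component vanishes at this order by the uniaxial-stress hypothesis of that theorem. Restricting to tangent vectors of $C$ retains only the $E_{11}$ entry, so $(g_{[t]}-g_{[0]})|_C=\bigO(B^2)\to 0$ as $B\to 0$, which is the stated approximation.

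For $\kappa_{[t]}\approx\kappa_{[0]}$, the first step is to reinterpret $\kappa_{[t]}$ intrinsically. Because $M_{[t]}$ is realised as a region of the flat plane $\setE^2$ via the embedding $\Phi$, the planar curvature of $C_{[t]}=\Phi(C)$ in $\setE^2$ coincides with the geodesic curvature of $C$ computed with respect to $g_{[t]}$. The problem therefore reduces to comparing the geodesic curvatures of the same parameterised curve $C$ under two nearby metrics, $g_{[0]}$ and $g_{[t]}$. Since geodesic curvature is a smooth functional of the first jet of the metric along the curve, the $\bigO(B^2)$ closeness of the two metrics provided by Theorem \ref{03-Thm-SA} should propagate to $\bigO(B^2)$ closeness of the two curvatures, yielding $\kappa_{[t]}-\kappa_{[0]}=\bigO(B^2)$.

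The main obstacle will be making this last step rigorous. The Christoffel symbols entering $\kappa$ involve first derivatives of the metric in the transverse direction as well, and it is not immediate that the transverse derivative of $E$ on $C$ is small. A clean way to handle this will be to work in a Fermi chart adapted to $C_{[0]}$---arc-length along $C_{[0]}$ in $u^1$ and transverse geodesic coordinate in $u^2$---in which $\kappa_{[0]}$ is encoded in the $u^2$-derivative of a single metric coefficient at $u^2=0$ and can be read off directly. In these coordinates the polynomial-in-$r$ structure of the strain supplied by Theorem \ref{03-Thm-SA} provides explicit control over the derivative of $g_{[t]}-g_{[0]}$ needed to bound $\kappa_{[t]}-\kappa_{[0]}$, closing the argument.
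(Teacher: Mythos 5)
Your plan inverts the logical structure of the paper's argument, and this is where the real gap lies. In the paper, Theorem \ref{03-Thm-SA} and Theorem \ref{03-Thm-GC} are not proved independently: both are read off from a single variational computation in which the strain energy $W(\beta)$ is expanded in powers of $\beta$ and its Taylor coefficients $a_1, a_3, a_5$ are minimized in dictionary order via Lemma \ref{TheLemma}. In that computation the conclusions of Theorem \ref{03-Thm-GC} come \emph{first}: minimizing $a_1$ forces $s_{[t]}^{(0,0)}=1$, $\eta^{(0,1,0)}=1$, $\xi^{(0,1,0)}=0$, which is exactly $g_{[t]\beta}|_C \in g_{[0]}|_C+\bigO(\beta^2)$, and $a_3$ contains the term $2(\kappa_{[0]}-\kappa_{[t]}^{(0,0)})^2$, so its minimization forces $\kappa_{[t]}^{(0,0)}=\kappa_{[0]}$; the strain profile of Theorem \ref{03-Thm-SA} only emerges afterwards, at the $a_5$ stage. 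Deriving \ref{03-Thm-GC} from \ref{03-Thm-SA} therefore presupposes an independent proof of \ref{03-Thm-SA} that the paper does not provide and that you do not supply; as written the argument is circular.

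Even granting Theorem \ref{03-Thm-SA} as a black box, the curvature step does not close. Your reduction of $\kappa_{[t]}$ to the geodesic curvature of $C$ under the pulled-back flat metric is correct, and in the centered (Fermi) chart one indeed has $\kappa_{[0]}=-\tfrac{1}{2}\,\spd{g_{[0]11}}{u^2}\big|_{u^2=0}$, so what you need is the transverse derivative of $E_{11}$ on $C$. But Theorem \ref{03-Thm-SA} controls $E^{\angl{0}}_{\beta 11}$ only up to an additive $\bigO(\beta^3)$ error, and an $\bigO(\beta^3)$ bound on a function gives no bound on its derivative; after the rescaling $u^2=\beta B r$ the derivative of the remainder is moreover divided by $\beta B$. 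The \dquote{polynomial-in-$r$ structure} you invoke is a statement about the leading term only, not about the remainder, so the derivative control is precisely what is missing. In the paper that control comes not from the strain profile but from the explicit jet coefficients $\eta^{(0,2,0)}=0$, $\xi^{(0,2,0)}=0$, $\kappa_{[t]}^{(0,0)}=\kappa_{[0]}$ produced by minimizing $a_3$. Your first deduction, $g_{[t]}|_C-g_{[0]}|_C=2E|_C=\bigO(B^2)$, is sound as a consequence of \ref{03-Thm-SA}, but it inherits the same circularity.
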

Let $\bm{c}_{[0]}(u^1)=\bm{p}_{[0]}(u^1,c), \bm{c}_{[t]}(u^1)=\bm{p}_{[t]}(u^1,c)$ be the center curve parameterizations of $C_{[0]}, C_{[t]}$.
And let $\curl{\bm{e}_{[0]i}}, \curl{\bm{e}_{[t]i}}$ be orthonormal bases on $C_{[0]}, C_{[t]}$ defined by Gram-Schmidt orthonormalization of $\curl{\bm{p}_{[0]i}}, \curl{\bm{p}_{[t]i}}$.
Then the approximation $g_{[t]}|_C \approx g_{[0]}|_C$ leads $\bm{e}_{[t]i} \approx \Phi_{*}\bm{e}_{[0]i}$%
\footnote{Note that this bases $\curl{\bm{e}_{[0] i}}, \curl{\bm{e}_{[t] i}}$ are different from the orthonormal frames $\curl{e^{\angl{0}}_i}, \curl{e^{\angl{t}}_i}$ because the former is defined on $C_{[0]}, C_{[t]}$, but the latter is defined on some open subset of $M_{[0]}, M_{[t]}$.}.
The next Fig.\ref{03-Fig-GC} illustrates how Theorem \ref{03-Thm-GC} works.
\begin{figure}[H]
    \centering
    \includegraphics[page=21,clip,width=115mm]{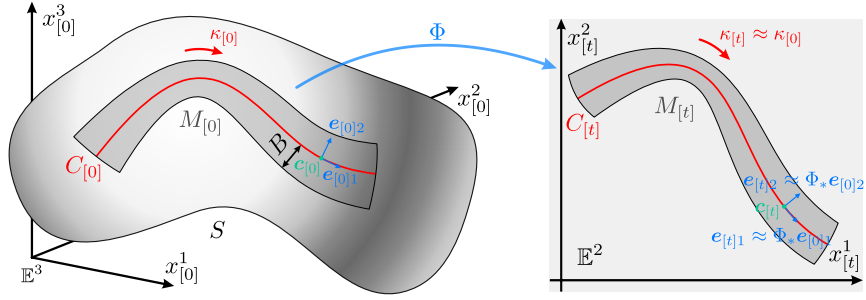}
    \caption{Approximation of embedding  with Theorem \ref{03-Thm-GC}.}
    \label{03-Fig-GC}
\end{figure}

This theorem shows that if the breadth $B$ is small enough, the shape of the embedding $M_{[t]}$ can be approximated only by geometric information such as the geodesic curvature $\kappa_{[0]}$ and the Riemannian metric $g_{[0]}$.
The proof of this theorem will be given in a later Section \ref{03-Sec-Proof}.
Based on the above, we characterize \emph{initial state} $M_{[s]}$ in the following.
This state $M_{[s]}$ can be used for the determination of the initial values in the Newton-Raphson method.
We use $\dashedph_{[s]}$ notation%
\footnote{The letter in $\dashedph_{[s]}$ is coming from the initial of \dquote{starting point}.}
to represent a symbol that relates initial state similarly to $\dashedph_{[0]}$ and $\dashedph_{[t]}$.
\begin{enumerate}[label=(a-\arabic*)]
    \item The Riemannian metric in the reference state and the initial state coincide on the center curve $C$. That is, $g_{[0]}|_{C}=g_{[s]}|_{C}$.
    \item The geodesic curvature $\kappa_{[0]}$ of the center curve $C_{[0]}$ in the reference state and the planar curvature $\kappa_{[s]}$ of the center curve $C_{[s]}$ in the initial state are equal.
    \item The tangent vector does not change in the breadth direction. That is, $\pd{}{u^2}\bm{p}_{[s]2}=\bm{0}$.
\end{enumerate}
The validity of conditions (a-1) and (a-2) follows from the Theorem \ref{03-Thm-GC}.
Condition (a-3) is not essential, but it is required for the uniqueness of $M_{[s]}$.
Following the condition from (a-1) to (a-3), the initial state $M_{[s]}$ can be constructed explicitly.

\begin{prop}[Construction of Initial State $M_{[s]}$]
    \label{Prop-construct-Ms}
    The initial manifold $M_{[s]}$ can be constructed explicitly by the following ODE.
    \begin{align}
        M_{[s]}
        &=\Set{\bm{p}_{[s]}(u^1,u^2) | (u^1, u^2)\in D} \\
        \label{Eqn-Ms-2}
        \bm{p}_{[s]}(u^1,u^2)
        &=\bm{c}_{[s]}(u^1)+\begin{pmatrix}g_{[0]12}(u^1,c) & -\sqrt{\det\limits_{i,j}\pare{g_{[0]ij}(u^1,c)}} \\ \sqrt{\det\limits_{i,j}\pare{g_{[0]ij}(u^1,c)}} & g_{[0]12}(u^1,c)\end{pmatrix}\frac{\pare{u^2-c} \dot{\bm{c}}_{[s]}(u^1)}{g_{[0]11}(u^1,c)} \\
        \label{Eqn-Ms-3}
        \ddot{\bm{c}}_{[s]}
        &=\begin{pmatrix}\lfrac{\dot{s}_{[0]}}{s_{[0]}} & -\kappa_{[0]}s_{[0]} \\ \kappa_{[0]}s_{[0]} & \lfrac{\dot{s}_{[0]}}{s_{[0]}}\end{pmatrix}\dot{\bm{c}}_{[s]} \\
        \label{Eqn-Ms-4}
        \norm{\dot{\bm{c}}_{[s]}}
        &=s_{[0]}
    \end{align}
    where $\kappa_{[0]}$ is the geodesic curvature of the center curve $\bm{c}_{[0]}$, $s_{[0]}$ is the speed of the parametrization $\bm{c}_{[0]}$, and $\dot{\dashedph}$ is a differential operator with respect to $u^1$.
    \QEDA
\end{prop}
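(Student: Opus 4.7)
The plan is to verify the three defining conditions (a-1)--(a-3) directly from the explicit formulas (\ref{Eqn-Ms-2})--(\ref{Eqn-Ms-4}), in the order (a-3), (a-1), (a-2), and then close the argument with a compatibility check that the speed constraint (\ref{Eqn-Ms-4}) is preserved by the ODE (\ref{Eqn-Ms-3}). Condition (a-3) is immediate because $\bm{p}_{[s]}$ is affine in $u^2$, so $\bm{p}_{[s]2} = \partial \bm{p}_{[s]}/\partial u^2$ depends only on $u^1$ and its $u^2$-derivative vanishes identically.

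For (a-1), I would restrict (\ref{Eqn-Ms-2}) to $u^2 = c$: then $\bm{p}_{[s]}|_C = \bm{c}_{[s]}$ and $\bm{p}_{[s]1}|_C = \dot{\bm{c}}_{[s]}$, so (\ref{Eqn-Ms-4}) immediately gives $g_{[s]11}|_C = s_{[0]}^2 = g_{[0]11}|_C$. Denoting the $2\times 2$ matrix in (\ref{Eqn-Ms-2}) by $M_0$, one has $\bm{p}_{[s]2}|_C = M_0 \dot{\bm{c}}_{[s]}/g_{[0]11}$, and the two identities
\begin{align*}
M_0 + M_0^T &= 2 g_{[0]12}\, I, &
M_0^T M_0 &= \Pare{g_{[0]12}^2 + \det g_{[0]}} I = g_{[0]11}\, g_{[0]22}\, I
\end{align*}
reduce $\dot{\bm{c}}_{[s]} \cdot M_0 \dot{\bm{c}}_{[s]}$ and $\|M_0 \dot{\bm{c}}_{[s]}\|^2$ to $g_{[0]12}\, g_{[0]11}$ and $g_{[0]11}^2\, g_{[0]22}$ respectively; the $g_{[0]11}$ factors cancel to give $g_{[s]12}|_C = g_{[0]12}$ and $g_{[s]22}|_C = g_{[0]22}$.

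For (a-2), I would decompose the right-hand side of (\ref{Eqn-Ms-3}) in the moving frame $T = \dot{\bm{c}}_{[s]}/s_{[0]}$, $N = R T$ with $R$ the standard $+90^\circ$ rotation. Using (\ref{Eqn-Ms-4}), the ODE becomes $\ddot{\bm{c}}_{[s]} = \dot{s}_{[0]} T + \kappa_{[0]} s_{[0]}^2 N$, which matched against the planar Frenet identity $\ddot{\bm{c}}_{[s]} = \dot{s}_{[s]} T + s_{[s]}^2 \kappa_{[s]} N$ yields $\kappa_{[s]} = \kappa_{[0]}$ once $s_{[s]} = s_{[0]}$ is in hand. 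For this last point, differentiate $\|\dot{\bm{c}}_{[s]}\|^2$ along (\ref{Eqn-Ms-3}): antisymmetry of $R$ kills the cross term and leaves $\tfrac{d}{du^1}\|\dot{\bm{c}}_{[s]}\|^2 = 2(\dot s_{[0]}/s_{[0]})\|\dot{\bm{c}}_{[s]}\|^2$, which coincides with $\tfrac{d}{du^1} s_{[0]}^2$, so (\ref{Eqn-Ms-4}) propagates once imposed initially. The main obstacle is the (a-1) step: the nontrivial observation is that precisely the matrix appearing in (\ref{Eqn-Ms-2}) satisfies both algebraic identities above, and in particular that $M_0^T M_0$ is isotropic with eigenvalue $g_{[0]11}\, g_{[0]22}$. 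Once that pattern is spotted, every other cancellation is routine bookkeeping.
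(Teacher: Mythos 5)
Your proposal is correct, but it runs in the converse direction to the paper's proof. The paper \emph{derives} the formulas from the defining conditions: it writes $\bm{q}_{[s]2}$ in the moving frame $\{\bm{e}_{[s]1},\bm{e}_{[s]2}\}$, reads off the required components $g_{[0]12}/\sqrt{g_{[0]11}}$ and $\sqrt{\det g_{[0]}}/\sqrt{g_{[0]11}}$ from (a-1), obtains (\ref{Eqn-Ms-3}) by substituting $s_{[s]}=s_{[0]}$ and $\kappa_{[s]}=\kappa_{[0]}$ into $\ddot{\bm{c}}_{[s]}=\dot{s}_{[s]}\bm{e}_{[s]1}+s_{[s]}\dot{\bm{e}}_{[s]1}$, and assembles $\bm{p}_{[s]}$ via (a-3). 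You instead \emph{verify} that the stated formulas satisfy (a-1)--(a-3), which is the logically cleaner direction for the claim as worded, and your algebraic identities $M_0+M_0^T=2g_{[0]12}I$ and $M_0^TM_0=g_{[0]11}g_{[0]22}I$ are exactly the inverse of the paper's frame decomposition (the paper builds $M_0=g_{[0]12}I+\sqrt{\det g_{[0]}}\,R$ from the target inner products; you recover those inner products from $M_0$). Your computations check out, including the use of $s_{[0]}^2=g_{[0]11}(u^1,c)$ on the center curve. The one genuinely new ingredient you supply is the propagation argument: differentiating $\norm{\dot{\bm{c}}_{[s]}}^2$ along (\ref{Eqn-Ms-3}) and using antisymmetry of $R$ to show the speed constraint (\ref{Eqn-Ms-4}) is preserved once imposed at a single point. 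The paper dismisses (\ref{Eqn-Ms-4}) as ``obvious from (a-1),'' which leaves the consistency of the ODE with the algebraic constraint unaddressed; your check closes that small gap and also clarifies that the arbitrary initial data of the ODE must be chosen with $\norm{\dot{\bm{c}}_{[s]}(u^1_0)}=s_{[0]}(u^1_0)$, after which only a rigid motion remains free.
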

\begin{proof}
    The the speed $s_{[s]}$ and the planar curvature $\kappa_{[s]}$ of the center curve $C_{[s]}$, and orthonormal basis $\curl{\bm{e}_{[s]i}}$ on the curve have the following properties.
    \begin{align}
        s_{[s]} &= \norm{\dot{\bm{c}}_{[s]}}, &
        \bm{e}_{[s]1} &= \frac{\dot{\bm{c}}_{[s]}}{s_{[s]}}, &
        \bm{e}_{[s]2} &= \begin{pmatrix} 0 & -1 \\ 1 & 0 \end{pmatrix}\bm{e}_{[s]1}, &
        \dot{\bm{e}}_{[s]1} &= s_{[s]} \kappa_{[s]} \bm{e}_{[s]2}.
    \end{align}
    Then, Eq.(\ref{Eqn-Ms-4}) is obvious from the condition (a-1).
    The ODE Eq.(\ref{Eqn-Ms-3}) can be obtained with the condition (a-2):
    \begin{align}
        \ddot{\bm{c}}_{[s]}
        &=\dot{s}_{[s]}\bm{e}_{[s]1} + s_{[s]}\dot{\bm{e}}_{[s]1}
        =\begin{pmatrix}\lfrac{\dot{s}_{[s]}}{{s}_{[s]}} & -\kappa_{[s]}{s}_{[s]} \\ \kappa_{[s]}{s}_{[s]} & \lfrac{\dot{s}_{[s]}}{{s}_{[s]}}\end{pmatrix}\dot{\bm{c}}_{[s]}
        =\begin{pmatrix}\lfrac{\dot{s}_{[0]}}{{s}_{[0]}} & -\kappa_{[0]}{s}_{[0]} \\ \kappa_{[0]}{s}_{[0]} & \lfrac{\dot{s}_{[0]}}{{s}_{[0]}}\end{pmatrix}\dot{\bm{c}}_{[s]}
    \end{align}
    Here, let $\bm{q}_{[s]i}(u^1)=\bm{p}_{[s]i}(u^1,c)$ be tangent vectors on $M_{[s]}$ on the center curve $C$.
    These tangent vectors can be obtained as
    \begin{align}
        &\bm{q}_{[s]1}
        =\dot{\bm{c}}_{[s]}, \\
        &\begin{aligned}
            \bm{q}_{[s]2}
            &=\Pare{{\bm{q}_{[s]2}\cdot\bm{e}_{[s]1}}}\bm{e}_{[s]1}+\Pare{{\bm{q}_{[s]2}\cdot\bm{e}_{[s]2}}}\bm{e}_{[s]2} \\
            &=\begin{pmatrix}g_{[0]12} & -\sqrt{\det\limits_{i,j}\pare{g_{[0]ij}}} \\ \sqrt{\det\limits_{i,j}\pare{g_{[0]ij}}} & g_{[0]12}\end{pmatrix}\frac{\bm{q}_{[s]1}}{g_{[0]11}}.
        \end{aligned}
    \end{align}
    Then, the manifold $M_{[s]}$ can be constructed by the condition (a-3).
    \begin{align}
        \label{Eqn-def-Ms}
        M_{[s]}
        &=\Set{\bm{p}_{[s]}(u^1,u^2) | (u^1, u^2)\in D} \\
        \bm{p}_{[s]}(u^1,u^2)
        &=\bm{c}_{[s]}(u^1)+\pare{u^2-c}\bm{q}_{[s]2}(u^1)
    \end{align}
    The above gives us all the formulas in the proposition.
\end{proof}
The initial manifold $M_{[s]}$ can be obtained numerically by solving the ODE with Runge-Kutta method.
We can set the initial condition of the ODE arbitrarily, and that will produce a rigid transformation in $\setE^2$.
See the next Fig.\ref{03-Fig-Ms} for the illustration of the construction of the initial manifold $M_{[s]}$%
\footnote{The tangent vectors on the center curve $C_{[0]}$ in the figure are defined as $\bm{q}_{[0]i}(u^1)=\bm{p}_{[0]i}(u^1,c)$.}.
\begin{figure}[H]
    \centering
    \includegraphics[page=22,clip,width=115mm]{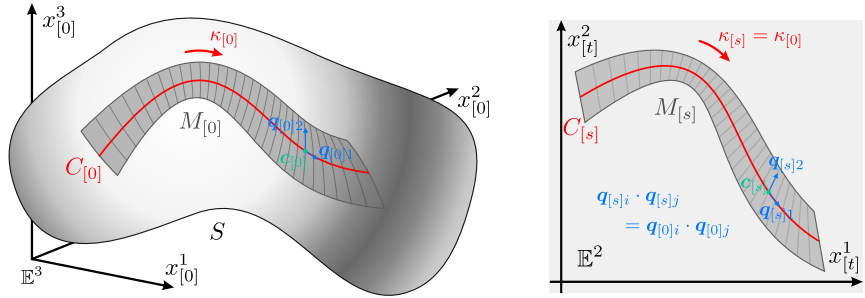}
    \caption{Construction of the initial manifold $M_{[s]}$.}
    \label{03-Fig-Ms}
\end{figure}
The next step of our proposed method is approximating the initial manifold $M_{[s]}$ with B-spline manifold $\tilde{M}_{[s]}$.
Since the unknown values of the equations Eq.(\ref{Eqn-Newton}) are the control points of the B-spline manifold, we need to give the initial value of that.
In this paper, these control points are determined by the following least-squares method%
\footnote{The polynomial degrees of the B-spline manifold are determined as $(p^1,p^2) = (3,1)$ because the curves $(u^1: \text{const.})$ on $M_{[s]}$ are straight line in $\setE^2$.}.
\begin{align}
    \tilde{M}_{[s]} &= \set{\tilde{\bm{p}}_{[s]}(u^1, u^2) | (u^1, u^2) \in D } \\
    \tilde{\bm{p}}_{[s]}(u^1, u^2) &= \bm{\alpha}_I N^I(u^1, u^2) \\
    \underset{\displaystyle \curl{\bm{\alpha}_I}}{\text{minimize}} &\quad
    \int_D \Norm{\bm{p}_{[s]}(u^1,u^2)-\tilde{\bm{p}}_{[s]}(u^1,u^2)}^2 du^1 du^2
\end{align}
where the control points $\curl{\bm{\alpha}_I}$ are used for the initial values for the Newton-Raphson method, i.e. $\bm{\alpha}_I = \newton{0}{\bm{a}_I}$.
We also denote $\newton{0}{\tilde{M}_{[t]}} = \tilde{M}_{[s]}$ for the approximated initial manifold.

\subsubsection{Iteration with Newton-Raphson method}
\label{Sec-iteration}
As described above, the whole boundary of $M$ is the Neumann boundary, and hence the solution is not unique (Fig.\ref{Fig-NoConstraints}).
This leads that the matrix $\sod{F}{a} = \Pare{\spd{F^J_j}{a_I^i}}$ will be degenerate, and cannot compute $\spd{a_I^i}{F^J_j}$ in Eq.(\ref{Eqn-Newton}).
To solve this problem, we need to fix some control points.
Fig.\ref{Fig-AvoidCongruence} shows constraints to avoid rigid transformations.
We use this type of constraint as default.

In some cases, the fixing condition is not good enough for convergence speed.
Fixing three points (left endpoint, right endpoint, and center point) makes faster convergence, especially in the early stage of the iterations (Fig.\ref{Fig-ThreePoints})%
\footnote{See Section \ref{Sec-Paraboloid-Numerical-Result} for an example of fixing three points.}.
This is because the initial state $M_{[s]}$ was based on only geometric properties, and it does not include information about elasticity.
This leads to unnatural strain distribution on each point on $M_{[s]}$, causing the first step of the Newton-Raphson method to move in a strange direction.
On the other hand, the center curve embedding $C_{[s]}$ is roughly correct globally (Theorem \ref{03-Thm-GC}).
Therefore, fixing these three points is suitable as an auxiliary constraint.

\begin{figure}[H]
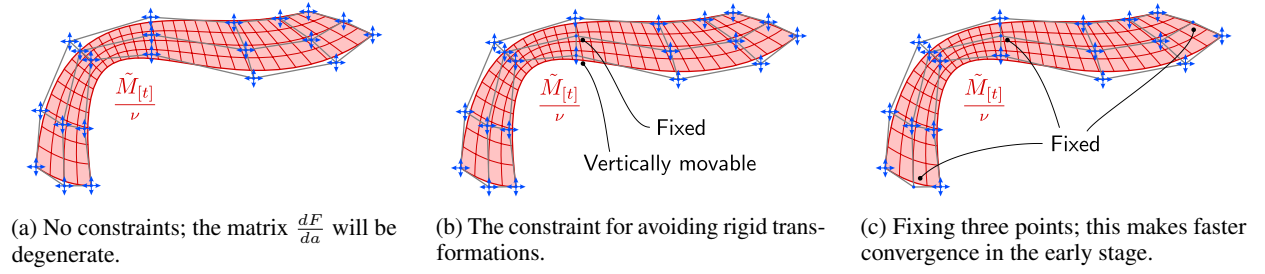

    \centering
    \begin{minipage}{0.31\hsize}
        \centering
        \includegraphics[page=16,clip,width=48mm]{ESE_paper.pdf}
        \subcaption{No constraints; the matrix $\od{F}{a}$ will be degenerate.}
        \label{Fig-NoConstraints}
    \end{minipage}
    \hspace{1em}
    \begin{minipage}{0.31\hsize}
        \centering
        \includegraphics[page=17,clip,width=48mm]{ESE_paper.pdf}
        \subcaption{The constraint for avoiding rigid transformations.}
        \label{Fig-AvoidCongruence}
    \end{minipage}
    \hspace{1em}
    \begin{minipage}{0.31\hsize}
        \centering
        \includegraphics[page=18,clip,width=48mm]{ESE_paper.pdf}
        \subcaption{Fixing three points; this makes faster convergence in the early stage.}
        \label{Fig-ThreePoints}
    \end{minipage}
    \caption{Various types of constraints.}
    \label{Fig-Constraints}
\end{figure}

\subsection{Overview of our method}
\label{Sec-overview}
The next Fig.\ref{03-Fig-Flowchart} is a flowchart of our proposed method.
First, the shape of the target surface $S\subseteq \setE^3$ is given.
Then, the surface will be split into pieces not to have much strain on them.
Each piece of the surface will be embedded into $\setE^2$ by the numerical computation steps.
Checking the convergence of the Newton-Raphson method is easy, but checking whether the refinement operation is enough is not straightforward.
We can determine this by checking the strain distribution $\tilde{E}^{\angl{0}}_{11}$.
If the refinement is poor, we can find unnatural patterns along with its knot vectors%
\footnote{See Section \ref{Sec-HyperbolicParaboloid-NumericResult} for an example of the unnatural strain distribution pattern.}.
This unnatural pattern can be solved by inserting more knots around it.
\begin{figure}[H]
    \centering
    \scalebox{0.8}{\input{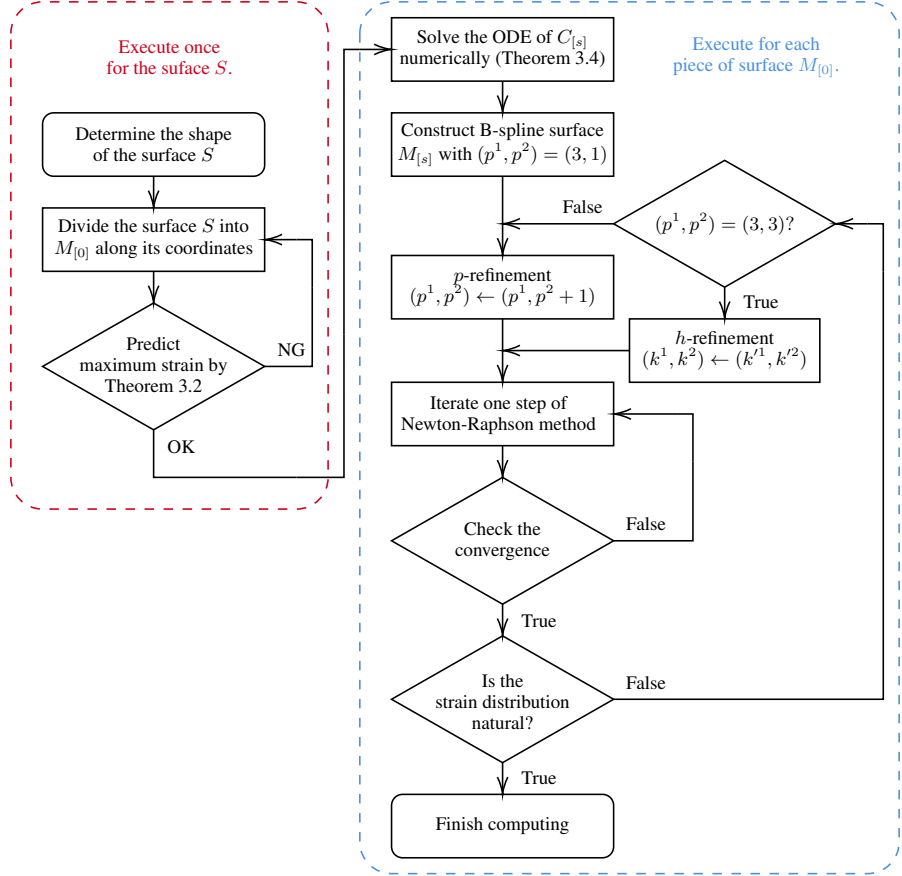}}
    \caption{Flowchart of the whole computation process.}
    \label{03-Fig-Flowchart}
\end{figure}
The next diagram shows relations between all symbols for manifolds.
\begin{align}
    \begin{aligned}
        \begin{tikzpicture}[auto]
            \node (M0) at (-1.0,0) {$M_{[0]}$};
            \node (Ms) at (2.3,0) {$M_{[s]}$};
            \node (tMstMt0_dummy) at (6.0,0) {$\phantom{\tilde{M}_{[s]} = \tilde{M}_{[t]}}$};
            \node (tMstMt0) at (6.0,-0.12) {$\tilde{M}_{[s]} = \newton{0}{\tilde{M}_{[t]}}$};
            \node (tMtnMt_dummy) at (10.0,0) {$\phantom{\newton{\nu}{\tilde{M}_{[t]}} \approx M_{[t]}}$};
            \node (tMtnMt) at (10.0,-0.12) {$\newton{\nu}{\tilde{M}_{[t]}} \approx M_{[t]}$};
            \node (Mt) at (10.8,-0.15) {$\phantom{M_{[t]}}$};
            \node (dots) at (8.0,0) [label={[align=center]\small Newton-Raphson \\[-0.5em] method and \\[-0.5em] refinement}] {$\cdots$};
            \draw[->] (M0) to node [align=center]{\small Solve the ODE \\[-0.5em] (Proposition \ref{Prop-construct-Ms})} (Ms);
            \draw[->] (Ms) to node [align=center]{\small B-spline \\[-0.5em] approximation} (tMstMt0_dummy);
            \draw[->] (tMstMt0_dummy) to node [align=center]{} (dots);
            \draw[->] (dots) to node [align=center]{} (tMtnMt_dummy);
            \draw[->] (M0) to [bend right=18] node {$\Phi$} (Mt);
        \end{tikzpicture}
    \end{aligned}
\end{align}

\subsection{Implementation with Julia Language}
We have implemented our method using the Julia Language \cite{bezanson_julia_2015}, and its packages such as ForwardDiff.jl \cite{revels_forward-mode_2016} and BasicBSpline.jl \cite{yuto_horikawa_2022_7109517}.
Our code is available on our GitHub repository%
\footnote{\url{https://github.com/hyrodium/ElasticSurfaceEmbedding.jl}}.

\section{Results}
\label{Sec-results}
In this section, we will provide some results using our theory.
During the computation, we assumed that the stiffness tensor field $C$ of the paper medium is isotropic and its Poisson's ratio of paper strips is $\nu=0.25$.
This value of Poisson's ratio is based on \cite{szewczyk_determination_2008}.
The value of Young's modulus $Y$ does not affect the embedding shape $M_{[t]}$ if the modulus is constant on $M$.
This is because we do not have external forces (Proposition \ref{03-Thm-WFonLC}).
We also put $Y=1$ during the computation.

\subsection{Paraboloid}
\subsubsection{Parametric representation}
A paraboloid can be parametrized as
\begin{align}
    \bm{p}_{[0]}(u^1,u^2)
    &=\begin{pmatrix}
    u^1 \\
    u^2 \\
    \pare{u^1}^2+\pare{u^2}^2
    \end{pmatrix},
    \qquad (u^1,u^2)
    \in [-1,1] \times [-1,1].
\end{align}
The next Fig.\ref{03-Fig-Paraboloid-a} shows this parametrization with a checker pattern of width $\delta=0.1$.
This shape is four-fold symmetry, so we need to calculate the embeddings for the ten strip shapes as shown in Fig.\ref{03-Fig-Paraboloid-b}.
i.e. the we need to calculate the embeddings $\tilde{\Phi}^{(i)} : M_{[0]}^{(i)} \to\tilde{M}_{[t]}^{(i)}$ for each domain
\begin{align}
    \label{Eqn-para-domain}
    D^{(i)} = [-1,1]\times[(i-1)\delta, i\delta] \quad (i = 1,\dots,10).
\end{align}

\begin{figure}[H]
    \centering
    \begin{minipage}{0.49\hsize}
        \centering
        \includegraphics[page=26,clip,width=50mm]{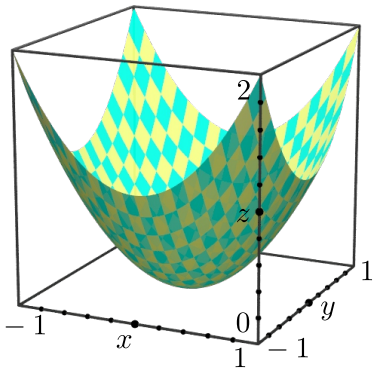}
        \subcaption{Coloring original surface $S$ with $\delta=0.1$.}
        \label{03-Fig-Paraboloid-a}
    \end{minipage}
    \begin{minipage}{0.49\hsize}
        \centering
        \includegraphics[page=27,clip,width=50mm]{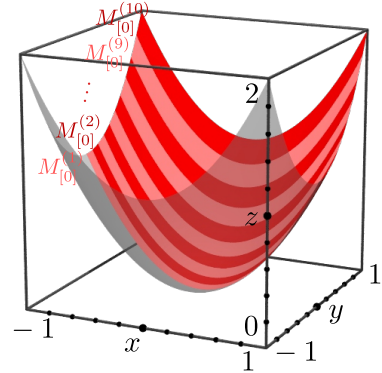}
        \subcaption{Split $S$ into pieces $M_{[0]}^{(i)}$.}
        \label{03-Fig-Paraboloid-b}
    \end{minipage}
    \caption{Graph of $z=x^2+y^2$ as a paraboloid surface $S$.}
\end{figure}

\subsubsection{Numerical result}
\label{Sec-Paraboloid-Numerical-Result}
The computing process as shown in Section \ref{Sec-overview} is not just a simple Newton-Raphson method, but also includes refinement (Fig.\ref{03-Fig-refinement}) and another type of Newton-Raphson method (Fig.\ref{Fig-Constraints}).
Thus, the computation process may compose a tree structure.
The next Fig.\ref{03-Fig-Paraboloid-history} shows the history of strain energies $\Delta W = \tilde{W} - W$ during computation of $M_{[0]}^{(10)}$ as a tree structure.
Where $\tilde{W}$ is a strain energy of each approximated embedding $\tilde{M}_{[t]}^{(i)}$, and $W$ is the strain energy of the exact embedding $M_{[t]}$.
The value of $W$ in the figure was calculated approximately with many refinements and many steps of the Newton-Raphson method.
As discussed in Section \ref{Sec-iteration}, Fixing three points makes faster convergence, especially in the early stage of the iterations.
Fig.\ref{03-Fig-Paraboloid-M} shows the numerically calculated embeddings of all of the pieces of the surface in Fig.\ref{03-Fig-Paraboloid-b}.

\begin{figure}[H]
    \centering
    \begin{minipage}{0.59\hsize}
        \centering
        \scalebox{0.47}{\input{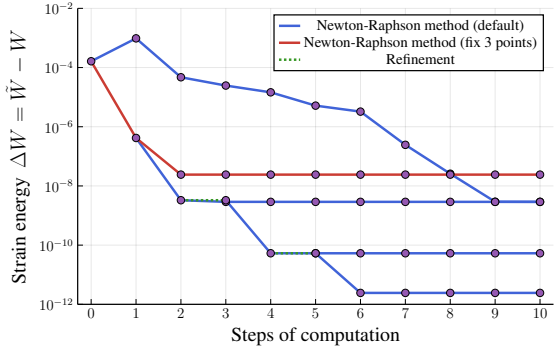}}
        \subcaption{Energy history during computation of $\tilde{M}_{[t]}^{(10)}$.}
        \label{03-Fig-Paraboloid-history}
        \end{minipage}
    \begin{minipage}{0.39\hsize}
        \centering
        \includegraphics[page=29,clip,width=54mm]{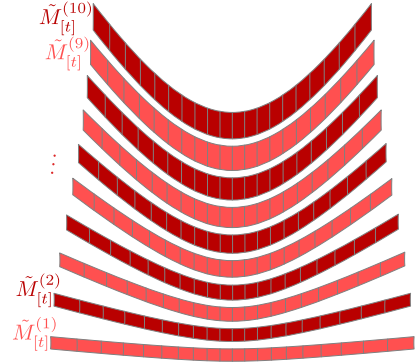}
        \subcaption{The embedded pieces of the paraboloid surface.}
        \label{03-Fig-Paraboloid-M}
        \end{minipage}
    \caption{Numerical result of the paraboloid surface.}
\end{figure}

\subsubsection{Papercraft model}
The next Fig.\ref{03-Fig-Paraboloid-P} shows the weaved papercraft model.
Each piece of the surface was cut by a laser-cutting machine, and these paper strips were assembled with wood glue.

\begin{figure}[H]
    \centering
    \includegraphics[page=28,clip,width=140mm]{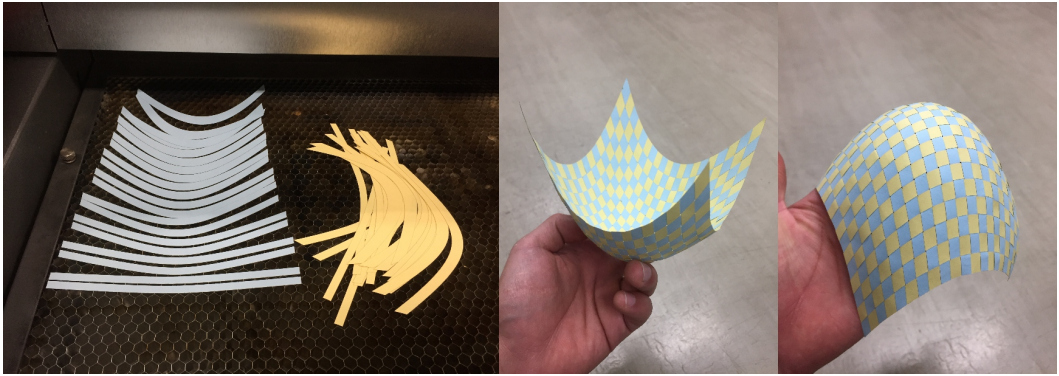}
    \caption{Papercraft model of the paraboloid surface.}
    \label{03-Fig-Paraboloid-P}
\end{figure}

\subsection{Hyperbolic paraboloid}
\subsubsection{Parametric representation}
A hyperbolic paraboloid can be parametrized as
\begin{align}
    \bm{p}_{[0]}(u^1,u^2)
    &=\begin{pmatrix}
    u^1 \\
    u^2 \\
    \pare{u^1}^2-\pare{u^2}^2
    \end{pmatrix},
    \qquad (u^1,u^2)
    \in [-1,1] \times [-1,1].
\end{align}
The next Fig.\ref{03-Fig-Paraboloid-a} shows this parametrization with a checker pattern of width $\delta=0.1$.
This shape also has a symmetry like a paraboloid in the previous section, so we need to calculate the embeddings for the ten strip shapes as shown in Fig.\ref{03-Fig-Paraboloid-b}.
The domain for each strip shape is the same as the paraboloid Eq.(\ref{Eqn-para-domain}).

\begin{figure}[H]
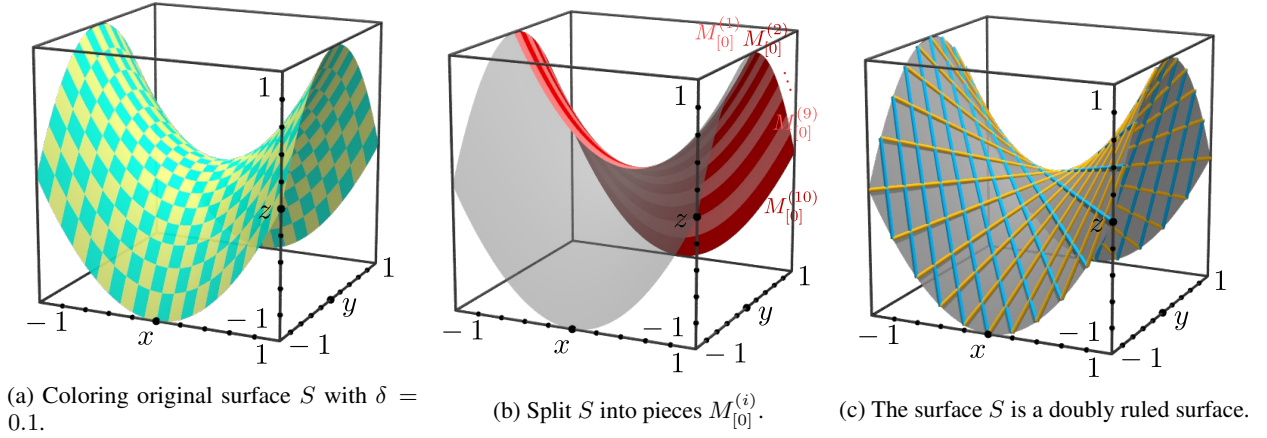

    \centering
    \begin{minipage}{0.33\hsize}
        \centering
        \includegraphics[page=34,clip,width=50mm]{ESE_paper.pdf}
        \subcaption{Coloring original surface $S$ with $\delta=0.1$.}
        \label{03-Fig-HyperbolicParaboloid-a}
    \end{minipage}
    \begin{minipage}{0.33\hsize}
        \centering
        \includegraphics[page=35,clip,width=50mm]{ESE_paper.pdf}
        \subcaption{Split $S$ into pieces $M_{[0]}^{(i)}$.}
        \label{03-Fig-HyperbolicParaboloid-b}
    \end{minipage}
    \begin{minipage}{0.33\hsize}
        \centering
        \includegraphics[page=36,clip,width=50mm]{ESE_paper.pdf}
        \subcaption{The surface $S$ is a doubly ruled surface.}
        \label{03-Fig-HyperbolicParaboloid-c}
    \end{minipage}
    \caption{Graph of $z=x^2-y^2$ as a hyperbolic paraboloid surface $S$.}
\end{figure}

Note that the hyperbolic surface $S$ is a doubly ruled surface as shown in Fig.\ref{03-Fig-HyperbolicParaboloid-c}.
This property makes the papercraft model interesting in the later section.

\subsubsection{Numerical result}
\label{Sec-HyperbolicParaboloid-NumericResult}
As discussed in Section \ref{Sec-overview}, we can detect whether the refinement operation is enough by visualizing the strain distribution $E_{11}^{\angl{0}}$.
The next Fig.\ref{03-Fig-HyperbolicParaboloid-Strain} shows these visualized distributions on $\tilde{M}_{[t]}^{(3)}$ during the refinement operations.
The fewer the number of control points, the larger the pattern along the knot vector appears in the strain distribution.
Fig.\ref{03-Fig-HyperbolicParaboloid-M} shows the numerically calculated embeddings of all of the pieces of the surface in Fig.\ref{03-Fig-HyperbolicParaboloid-b}.

\begin{figure}[H]
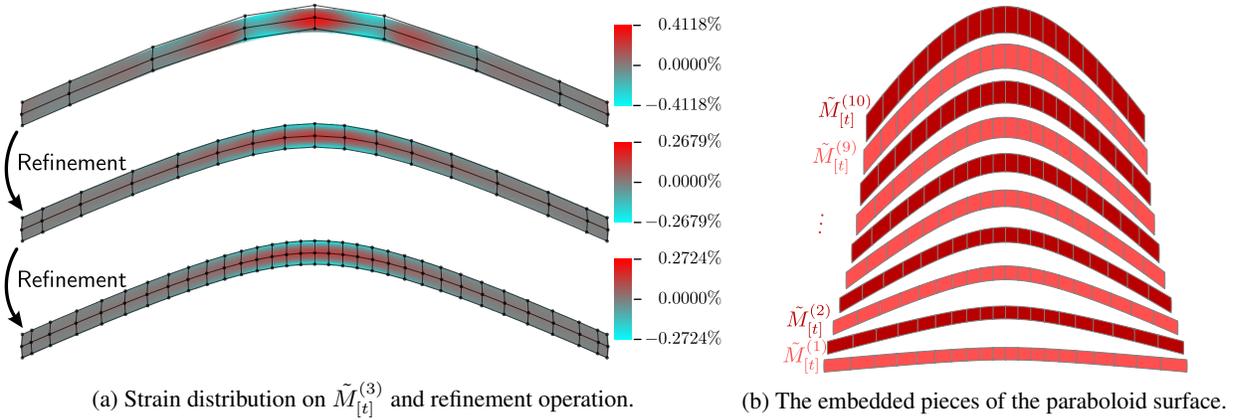

    \centering
    \begin{minipage}{0.60\hsize}
        \centering
        \includegraphics[page=39,clip,width=96mm]{ESE_paper.pdf}
        \subcaption{Strain distribution on $\tilde{M}_{[t]}^{(3)}$ and refinement operation.}
        \label{03-Fig-HyperbolicParaboloid-Strain}
        \end{minipage}
    \begin{minipage}{0.39\hsize}
        \centering
        \includegraphics[page=40,clip,width=54mm]{ESE_paper.pdf}
        \subcaption{The embedded pieces of the paraboloid surface.}
        \label{03-Fig-HyperbolicParaboloid-M}
        \end{minipage}
    \caption{Numerical result of the hyperbolic paraboloid surface.}
\end{figure}

\subsubsection{Papercraft model}

Fig.\ref{03-Fig-HyperbolicParaboloid-P} is a picture of the papercraft model of the surface.
By pulling the surface with two hands, we can feel there are straight lines on the surface.
This behavior reminds us that the hyperbolic paraboloid is a ruled surface.

\begin{figure}[H]
    \centering
    \includegraphics[page=1,clip,width=140mm]{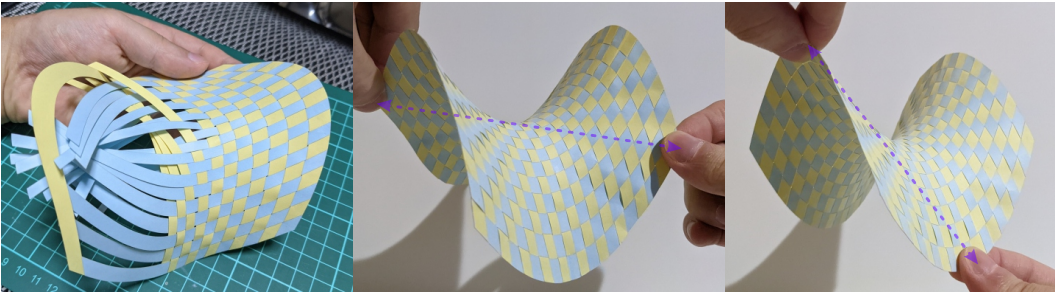}
    \caption{Papercraft model of the hyperbolic paraboloid surface. The dashed arrows are straight lines on the surface.}
    \label{03-Fig-HyperbolicParaboloid-P}
\end{figure}

\subsection{Catenoid and helicoid}

\subsubsection{Parametric representation}

A catenoid can be parametrized with
\begin{align}
    \bm{p}_{[0]}(u^1,u^2)
    &=\begin{pmatrix}
        \cosh(u^2)\cos(u^1) \\
        \cosh(u^2)\sin(u^1) \\
        u^2
    \end{pmatrix},
    \qquad (u^1, u^2) \in [-\pi, \pi] \times [-\pi/2, \pi/2].
\end{align}
Similarly, a helicoid can be parametrized with
\begin{align}
    \bm{p}_{[0]}(u^1,u^2)
    &=\begin{pmatrix}
        \sinh(u^2)\cos(u^1) \\
        \sinh(u^2)\sin(u^1) \\
        u^1
    \end{pmatrix},
    \qquad (u^1, u^2) \in [-\pi, \pi] \times [-\pi/2, \pi/2].
\end{align}
Both of these surfaces have the same Riemannian metric $g_{[0]}$.
\begin{align}
    g_{[0]}
    &=g_{[0]ij}du^i \otimes du^j, \\
    \Pare{g_{[0]ij}}
    &=\begin{pmatrix}
    \cosh(u^2)^2 & 0 \\
    0 & \cosh(u^2)^2
    \end{pmatrix}.
\end{align}
Thus, there exists a local isometric transformation between the catenoid (Fig.\ref{fig-catenoid}) and the helicoid (Fig.\ref{fig-helicoid}) \cite{ogawa_helicatenoid_1992, millman_elements_1977}.
In the context of our amigami theory, this property leads that both embedded pieces of the surfaces will be equivalent.

\subsubsection{Numerical result}
The next Fig.\ref{fig-catenoid} and \ref{fig-helicoid} show pieces of the catenoid and the helicoid.
Their computed embedding is shown in Fig.\ref{Fig-hc-embedding}, and their strain distribution is shown in Fig.\ref{Fig-hc-strain}.
\begin{figure}[H]
    \begin{minipage}[b]{.24\columnwidth}
      \centering
      \includegraphics[width=\columnwidth]{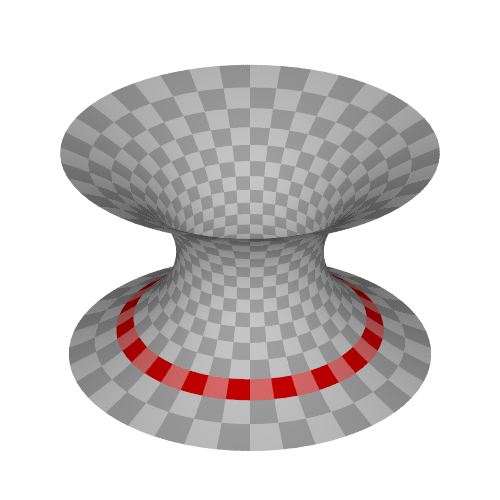}
    \end{minipage}
    \begin{minipage}[b]{.24\columnwidth}
      \centering
      \includegraphics[width=\columnwidth]{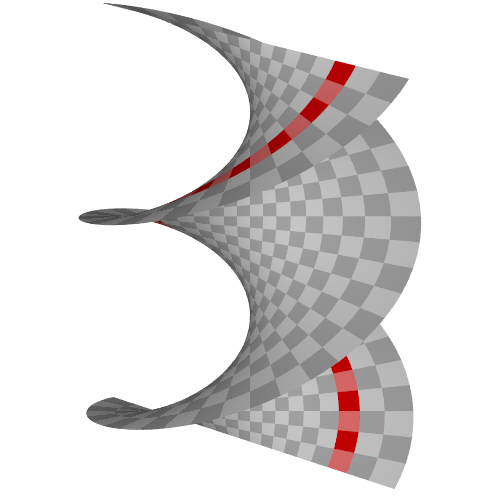}
    \end{minipage}
    \begin{minipage}[b]{.24\columnwidth}
      \centering
      \includegraphics[width=\columnwidth, page=32]{ESE_paper.pdf}
    \end{minipage}
    \begin{minipage}[b]{.24\columnwidth}
      \centering
      \includegraphics[width=\columnwidth, page=33]{ESE_paper.pdf}
    \end{minipage}

    \begin{minipage}[b]{.24\columnwidth}
      \centering
      \includegraphics[width=\columnwidth]{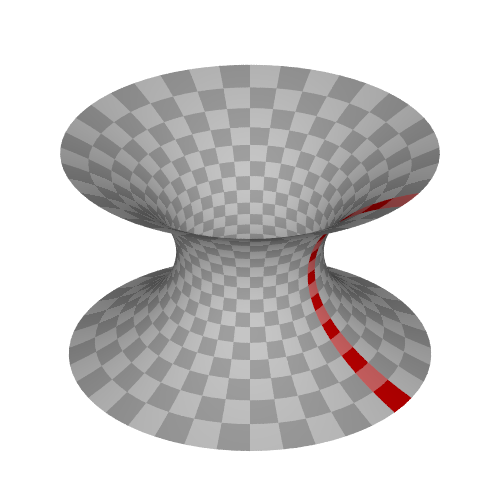}
      \subcaption{Catenoid.}
      \label{fig-catenoid}
    \end{minipage}
    \begin{minipage}[b]{.24\columnwidth}
      \centering
      \includegraphics[width=\columnwidth]{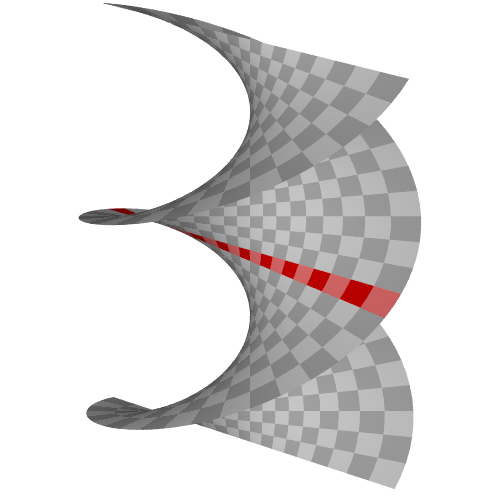}
      \subcaption{Helicoid.}
      \label{fig-helicoid}
    \end{minipage}
    \begin{minipage}[b]{.24\columnwidth}
      \centering
      \includegraphics[width=\columnwidth, page=30]{ESE_paper.pdf}
      \subcaption{Embedding.}
      \label{Fig-hc-embedding}
    \end{minipage}
    \begin{minipage}[b]{.24\columnwidth}
      \centering
      \includegraphics[width=\columnwidth, page=31]{ESE_paper.pdf}
      \subcaption{Strain $E^{\angl{0}}_{11}$.}
      \label{Fig-hc-strain}
    \end{minipage}
    \caption{Numerical results of the catenoid and the helicoid.}
\end{figure}

\subsubsection{Papercraft model}
Fig.\ref{TopImage} is the weaved surfaces of the catenoid and the helicoid, and the next Fig.\ref{03-Fig-HC} shows the deformation between these surfaces.
Two flexible steel plates are attached to each end of the curved surface, and some magnets are embedded in the wooden frame.
These curved surfaces are fixed in this way.
\begin{figure}[H]
    \centering
    \includegraphics[page=25,clip,width=130mm]{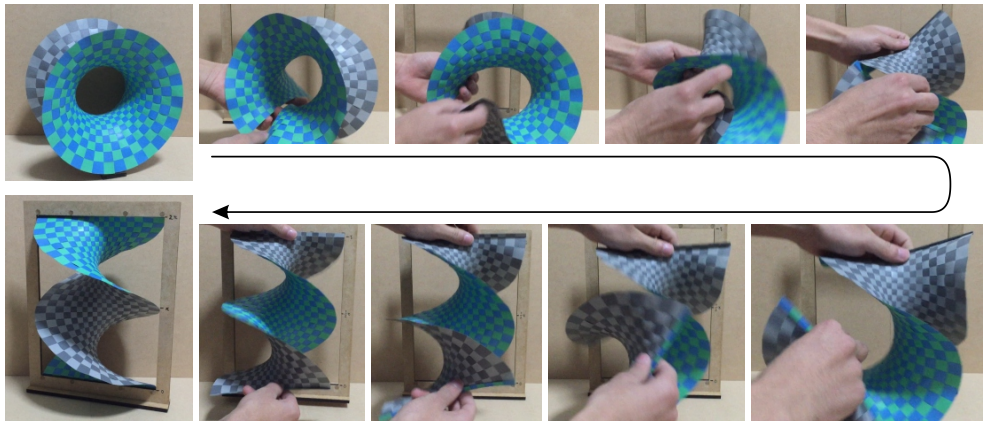}
    \caption{Isometric transformation from the catenoid to the helicoid\protect\footnotemark.}
    \label{03-Fig-HC}
\end{figure}
\footnotetext{The video of the deformation is uploaded on YouTube: \url{https://www.youtube.com/watch?v=Gp6XkPLCw7s}.}
This papercraft was exhibited at the 2019 Joint Mathematics Meetings \cite{yuto_horikawa_yuto_2019}.

\section{Conclusion}
\label{Sec-conclusion}
In this study, we conducted shape optimization of a 2-dimensional curved surface made of paper strips based on the theory of elasticity on Riemannian manifolds.
The results are summarized as follows.
\begin{itemize}
    \item The elasticity theory on Riemannian manifolds is appropriate for deformations of a surface.
    \item Amigami (weaving paper strips) is suitable for constructing a thin smooth surface.
    \item The shapes of the embedded pieces of a surface are based on a 2-dimensional Riemannian manifold, and it does not relate to how the pieces deform in 3-dimensional Euclidean space.
    \item The strain in the paper can be estimated with Gaussian curvature.
    \item The embedded strip shape can be approximated with geodesic curvature.
    \item The deformation of the strip shape has properties that are generalizations of Euler-Bernoulli's assumption.
    See Appendix \ref{Sec-Appendix-A} for more discussion.
\end{itemize}

\appendix
\section{Proof of the approximation theorems}
\label{Sec-Appendix-A}
\subsection{Rigorous version of the main theorems, and preparations}
Actually, Theorem \ref{03-Thm-SA} and Theorem \ref{03-Thm-GC} are ambiguous statements that cannot be precisely evaluated with respect to the breadth of a curved surface shape.
In this section, we will provide a rigorous version of these theorems and some preparations for their proof.

\begin{defn}[Centered Global Chart of Strip Shape]
Let $M_{[0]} = (M, g_{[0]})$ be a 2-dimensional Riemannian manifold.
If a chart $(U, \varphi)$ of $M_{[0]}$ satisfies the following properties, then the chart is called \emph{centered global chart}.
\begin{itemize}
    \item $U = M$, i.e. the chart $(U, \varphi)$ is global.
    \item The image $D = \varphi(M)$ satisfies
    \begin{align}
        D = \set{(u^1, u^2) | u^1 \in I, -B(u^1) \le u^2 \le B(u^1)}
    \end{align}
    where $I$ is a closed interval, and the function $B: I \to \setR_{> 0}$ represents the breadth of the strip shape along with the center curve $C = \set{\varphi^{-1}(u^1, u^2) | u^1 \in I, u^2 = 0}$.
    \item Let $(u^1,u^2) = \varphi(p)$ be coordinates of the chart.
    Then, the coordinates basis $\curl{\spd{}{u^i}}$ is orthonormal frame w.r.t. $g_{[0]}$ on the center curve $C$. (i.e. $g_{[0]ij}|_C=\delta_{ij}$)
    \item The curves along with $(u^1:\text{const.})$ are geodesic with unit tangent vector $\spd{}{u^2}$.
\end{itemize}
\vspace{-2em}
\QEDA
\end{defn}
If the manifold $M_{[0]}$ has a centered global chart, then, we can normalize the global domain $D$ by
\begin{align}
    \label{03-Eqn-NormalizedDomain}
    I\times [-1,1]
    = \Set{(s,r) | s=u^1, r=\frac{u^2}{B(u^1)}, (u^1,u^2)\in D}.
\end{align}
In most cases of the strip shape, the given manifold can be approximated by a 2-dimensional Riemannian manifold with a centered global chart%
\footnote{There might exist an exact centered global chart of the given manifold, but we do not give proof for its existence or approximation of it in this paper.}.
We would like to take a limit $B\to 0$ \dquote{uniformly}, so we will replace $B$ with $\beta B$ and take a limit $\beta \to 0$.
Where $\beta$ is a positive real number.
\begin{defn}[Breadth-parametrized Strip Shape]
Let $M_{[0]}$ has a centered global chart $\varphi$ and its domain $D=\varphi(M)$.
Let $D_{\beta}$ be a narrowed domain with parameter $\beta$ defined by
\begin{align}
    D_{\beta}
    &= \set{(u^1, u^2) | u^1 \in I, -\beta B(u^1) \le u^2 \le \beta B(u^1)}
    \quad
    (0 < \beta \le 1).
\end{align}
And let $M_{\beta}$ and $M_{[0]\beta}$ be narrowed manifolds with the domain $D_\beta$.
\begin{align}
    M_{\beta} &= \set{\varphi^{-1}(u^1, u^2) | (u^1, u^2) \in D_{\beta}}
    \subseteq M, &
    M_{[0]\beta} &= (M_{\beta}, g_{[0]\beta}), &
    g_{[0]\beta} &= g_{[0]}|_{M_{\beta}}.
\end{align}
Then, $M_{[0]\beta}$ also has a centered global chart $\varphi|_{M_\beta}$ and its domain $D_{\beta}$.
This parametrized shape $M_{[0]\beta}$ is called \emph{breadth-parametrized strip shape}.
\QEDA
\end{defn}
Let $M_{[0]\beta}$ be a breadth-parametrized strip shape, then we can take a normalized global domain like Eq.(\ref{03-Eqn-NormalizedDomain}).
\begin{align}
    \label{03-Eqn-NormalizedDomainBeta}
    I\times [-1,1]
    = \Set{(s,r) | s=u^1, r=\frac{u^2}{\beta B(u^1)}, (u^1,u^2)\in D_{\beta}}
\end{align}
The next Fig.\ref{03-Fig-Domain} illustrates breadth-parametrized strip shape $M_{[0]\beta}$, its centered global coordinates $(u^1, u^2)$ and its normalized coordinates $(s,r)$.
\begin{figure}[H]
    \centering
    \includegraphics[page=23,clip,width=120mm]{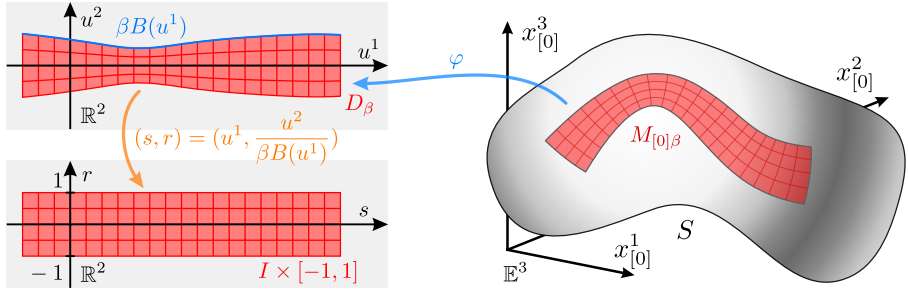}
    \caption{Centered global chart $D_{\beta}$ of reference state $M_{[0]\beta}$ and normalized domain $I\times [-1,1]$.}
    \label{03-Fig-Domain}
\end{figure}

We also assume that the stiffness tensor field $C$ is isotropic and its Lam\'{e} parameters are constant on $M$ in the following theorems%
\footnote{This assumption about the stiffness tensor may probably be weakened, but for simplicity of proof we assume the above since it is not a practical problem.}.
\begin{thm}[Approximation of Embedding]
    \label{A-Thm-GC}
    Let $M_{[0]\beta}$ be a breadth-parametrized strip shape and the stiffness tensor field $C$ is isotropic and its Lam\'{e} parameters are constant on $M$.
    Then, the following properties are satisfied.
    \begin{align}
        g_{[t]\beta}|_{C}
        &\in  g_{[0]}|_{C} + \bigO(\beta^2) \\
        \kappa_{[t]\beta}
        &\in \kappa_{[0]} + \bigO(\beta^2)
    \end{align}
    Where $\kappa_{[0]}$ is a geodesic curvature of $C_{[0]}$, and $\kappa_{[t]\beta}$ is a planer curvature of $C_{[t]\beta}$.
    \QEDA
\end{thm}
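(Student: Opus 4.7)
The plan is to work in the Fermi-type structure supplied by the centered global chart, Taylor-expand the embedding in the transverse coordinate, and extract the leading-order constraints from the weak-form equilibrium equation (Proposition \ref{03-Thm-WFonLC}). Because the chart is centered-global, on $C$ we have $g_{[0]ij}|_C = \delta_{ij}$ and the $u^2$-curves are unit-speed geodesics; standard geodesic-normal computations then give the expansion
\begin{align*}
g_{[0]11}(u^1, u^2) &= 1 - 2\kappa_{[0]}(u^1)\, u^2 + \bigO((u^2)^2), & g_{[0]22} &= 1, & g_{[0]12} &= 0.
\end{align*}
Rescaling by $r = u^2/(\beta B(u^1))$ fixes the reference domain as $I \times [-1, 1]$ and renders the $\beta$-dependence explicit. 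On this rescaled strip I would write the unknown embedding as
\begin{align*}
\bm{p}_{[t]\beta}(u^1, u^2) = \bm{c}_{[t]\beta}(u^1) + u^2 \bm{n}(u^1) + \tfrac{1}{2}(u^2)^2 \bm{h}(u^1) + \bigO((u^2)^3),
\end{align*}
and compute $g_{[t]\beta ij}$ by expanding the relevant dot products. Subtracting yields $E$ as a polynomial in $u^2$ whose coefficients are algebraic combinations of $\dot{\bm{c}}_{[t]\beta}, \bm{n}, \bm{h}, \kappa_{[0]}$, and because the constitutive law is isotropic with constant Lamé parameters, $S = C(E, \cdot)$ inherits the same structure.

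For the metric statement, I would choose a family of test vector fields of the form $X = \chi(u^1)\, \psi(r)\, \bm{e}_{\alpha}$ where $\bm{e}_\alpha$ ranges over an orthonormal basis of $\setE^2$ and $\psi$ is either constant or linear in $r$. Substituting into the weak form, integrating in $r$ over $[-1, 1]$ at fixed $u^1$, and collecting the lowest-order powers of $\beta$ yields three independent scalar relations on $C$. The constant-$\psi$ tests exploit the vanishing of the Neumann boundary terms at $r = \pm 1$ to force an asymptotically uniaxial stress state along $u^1$, which in turn constrains $|\dot{\bm{c}}_{[t]\beta}|^2 = 1 + \bigO(\beta^2)$, $|\bm{n}|^2 = 1 + \bigO(\beta^2)$, and $\dot{\bm{c}}_{[t]\beta} \cdot \bm{n} = \bigO(\beta^2)$. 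Together these give $g_{[t]\beta}|_C = g_{[0]}|_C + \bigO(\beta^2)$, consistent with the strain estimate of Theorem \ref{03-Thm-SA} applied self-consistently.

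For the curvature statement, the planar curvature $\kappa_{[t]\beta}$ of $\bm{c}_{[t]\beta}$ is read off from $\ddot{\bm{c}}_{[t]\beta}$ using the already-established unit-speed property. A test field linear in $r$ isolates the bending-moment contribution: at order $\beta$ the tangential stress multiplied by the lever arm $r$ must balance the transverse reaction, and after using the Fermi expansion $g_{[0]11} = 1 - 2\kappa_{[0]} u^2 + \cdots$ the identity that survives reads $\kappa_{[t]\beta} = \kappa_{[0]} + \bigO(\beta^2)$. The principal obstacle I expect is this moment-balance step: it requires identifying precisely which power of $\beta$ the bending contribution lives at and verifying that higher-order coefficients in the embedding expansion (the quadratic part $\bm{h}$ and above), together with the Poisson-type coupling from the isotropic $C$, do not contaminate the leading balance. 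This is the Riemannian analogue of the classical Euler--Bernoulli moment argument and will require the most careful bookkeeping.
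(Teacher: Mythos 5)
Your route is genuinely different from the paper's: you work with the weak-form equilibrium equation and hand-picked test fields (constant and linear in $r$) to extract axial-force and moment balances, in the spirit of a Riemannian Euler--Bernoulli argument. The paper never returns to the weak form for this theorem. Instead it parametrizes the embedding by $(\xi_\beta,\eta_\beta,s_{[t]\beta},\kappa_{[t]\beta})$, Taylor-expands the total strain energy $W(\beta)$ in the breadth parameter, and invokes Lemma \ref{TheLemma} to convert \dquote{minimize $W$ for every small $\beta$} into \dquote{minimize the Taylor coefficients $a_0,a_1,a_2,\dots$ in dictionary order}; each $a_m$ is then a positive quadratic form in finitely many unknown Taylor coefficients of the embedding and is minimized explicitly, with $a_1$ giving $s_{[t]}^{(0,0)}=1$, $a_3$ giving $\kappa_{[t]}^{(0,0)}=\kappa_{[0]}$ and $s_{[t]}^{(0,1)}=0$, and $a_5$ giving $\kappa_{[t]}^{(0,1)}=0$. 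That lexicographic lemma is what lets the paper control \emph{all} the higher-order coefficients of the embedding simultaneously rather than one balance law at a time.

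The concrete gap in your proposal is the order of accuracy. The theorem asserts $O(\beta^2)$ errors, so you must show not only the leading identities ($|\dot{\bm{c}}_{[t]\beta}|=1$, $\kappa_{[t]\beta}=\kappa_{[0]}$ at order $\beta^0$) but also that the \emph{first-order-in-$\beta$ corrections vanish}, i.e.\ $s_{[t]}^{(0,1)}=0$ and $\kappa_{[t]}^{(0,1)}=0$ in the paper's notation. Your leading moment balance can at best deliver $\kappa_{[t]\beta}=\kappa_{[0]}+\bigO(\beta)$; killing the next term requires going two orders deeper in the expansion, where the Gaussian curvature $K_{[0]}$ enters through the second-order term $(\kappa_{[0]}^2-K_{[0]})(u^2)^2$ of $g_{[0]11}$ --- a term you have truncated into your $\bigO((u^2)^2)$ remainder. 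At that depth the quadratic part $\bm{h}$ of your ansatz, the Poisson coupling, and the stretching correction $s_{[t]}^{(0,2)}$ all appear in the same balance (this is exactly the paper's $a_5$ computation, which it does by computer algebra), and you have no mechanism analogous to Lemma \ref{TheLemma} guaranteeing that these couplings cannot feed back into the curvature at order $\beta$. You flag this step yourself as the principal obstacle; as written it is asserted rather than proved, so the proof is incomplete precisely where the stated error exponent is decided.
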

\begin{thm}[Approximation of Strain]
    \label{A-Thm-SA}
    Let $M_{[0]\beta}$ be a breadth-parametrized strip shape and the stiffness tensor field $C$ is isotropic and its Lam\'{e} parameters are constant on $M$.
    Then, the following properties are satisfied.
    \begin{itemize}
        \item The stress state is approximately $u^1$-directional uniaxial
        \begin{align}
            S_{\beta}^{\angl{0}12},
            S_{\beta}^{\angl{0}21},
            S_{\beta}^{\angl{0}22}
            &\in \bigO(\beta^3).
        \end{align}
        \item The principal strain can be evaluated as
        \begin{align}
            E^{\angl{0}}_{\beta 11}
            &\in \frac{1}{2}K_{[0]}(\beta B)^2\Pare{r^2-\frac{1}{3}} + \bigO(\beta^3), &
            E^{\angl{0}}_{\beta 22}
            &\in -\nu E^{\angl{0}}_{\beta 11} + \bigO(\beta^3).
        \end{align}
    \end{itemize}
    Where $K_{[0]}$ is the Gaussian curvature of $M_{[0]\beta}$ along with the center curve $C$.
    \QEDA
\end{thm}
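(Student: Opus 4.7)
The plan is to reduce the problem to a beam-type asymptotic analysis by working in the centered global chart, which by construction furnishes Fermi coordinates about the center curve $C$. Because the $u^1 = \text{const}$ lines are unit-speed geodesics and the coordinate basis is orthonormal on $C$, Gauss's lemma gives $g_{[0]12} = 0$ and $g_{[0]22} = 1$ identically, while the single remaining component $g_{[0]11}(u^1, u^2) = f(u^1, u^2)^2$ satisfies the Jacobi equation $\partial_{u^2}^2 f + K_{[0]} f = 0$ along each perpendicular geodesic with initial data $f|_{u^2 = 0} = 1$ and $\partial_{u^2} f|_{u^2 = 0} = -\kappa_{[0]}$. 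Substituting $u^2 = \beta B r$ therefore gives, uniformly in $(s, r)$,
\begin{align*}
    g_{[0]11} = 1 - 2\kappa_{[0]}\beta B r + \Pare{\kappa_{[0]}^2 - K_{[0]}}(\beta B r)^2 + \bigO(\beta^3).
\end{align*}

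Next, write the minimizing embedding as a perturbation of the straight-normal development $\bm{p}_{[s]\beta}$ from Proposition \ref{Prop-construct-Ms}: $\bm{p}_{[t]\beta}(u^1, u^2) = \bm{p}_{[s]\beta}(u^1, u^2) + \beta^2 \bm{\phi}(u^1, u^2)$, where $\bm{p}_{[s]\beta}$ is built from a center curve whose planar curvature equals $\kappa_{[0]} + \bigO(\beta^2)$ by Theorem \ref{A-Thm-GC}. A direct pullback calculation shows $g_{[s]\beta 11} = (1 - \kappa_{[0]}\beta B r)^2 + \bigO(\beta^3)$, $g_{[s]\beta 12} = 0$, and $g_{[s]\beta 22} = 1$, so the unperturbed contribution to Green's strain is
\begin{align*}
    E^{\angl{0}}_{[s]\beta 11} = \tfrac{1}{2}K_{[0]}(\beta B)^2 r^2 + \bigO(\beta^3), \qquad E^{\angl{0}}_{[s]\beta 22} = 0, \qquad E^{\angl{0}}_{[s]\beta 12} = 0,
\end{align*}
while the correction $\bm{\phi}$ contributes additional $\bigO(\beta^2)$ terms that are linear in its first derivatives.

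These undetermined coefficients are pinned down by the weak-form equilibrium of Proposition \ref{03-Thm-WFonLC} together with the free Neumann boundary. The traction-free conditions $S_{\beta}^{\angl{0}i2}|_{u^2 = \pm \beta B} = 0$, combined with the bulk balance $\nabla_j S^{ij} = 0$ (whose $s$-derivatives are themselves of order $\beta^2$ since $E^{\angl{0}}_{\beta 11} \in \bigO(\beta^2)$), propagate the smallness of the transverse stress from the boundary into the interior, yielding $S_{\beta}^{\angl{0}12}, S_{\beta}^{\angl{0}22} \in \bigO(\beta^3)$. Inserting the isotropic constitutive law $S^{\angl{0}22} = \lambda E^{\angl{0}}_{\beta 11} + (\lambda + 2\mu) E^{\angl{0}}_{\beta 22}$ and using $\lambda/(\lambda + 2\mu) = \nu$ for $d = 2$ produces $E^{\angl{0}}_{\beta 22} = -\nu E^{\angl{0}}_{\beta 11} + \bigO(\beta^3)$. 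Since the strip is also free at its $s$-ends, integrating the axial balance across the breadth gives the zero-net-axial-force condition $\int_{-1}^{1} E^{\angl{0}}_{\beta 11}(s, r)\,dr \in \bigO(\beta^3)$; together with $\int_{-1}^{1} r^2\,dr = 2/3$ this forces a uniform downward shift of $\tfrac{1}{6}K_{[0]}(\beta B)^2$ in $E^{\angl{0}}_{\beta 11}$, producing the stated formula.

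The main obstacle is justifying the perturbative expansion rigorously: one must ensure that the minimizing embedding admits an asymptotic expansion in $\beta$ whose leading term agrees with the straight-normal development, and that the error estimates are uniform in $(s, r) \in I \times [-1, 1]$. A clean route is to change variables to $(s, r)$ so the reference domain becomes $\beta$-independent, rescale the unknowns and the strain energy appropriately, and then apply an implicit-function-theorem (or $\Gamma$-convergence) argument whose invertibility rests on the coercivity of the isotropic stiffness tensor $C$ on the space of admissible perturbations; this delivers both the uniform remainder estimates and the uniqueness of the limit profile.
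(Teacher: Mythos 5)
Your route is genuinely different from the paper's. The paper never invokes the strong-form balance law, traction-free conditions, or Saint-Venant-type resultant conditions: it parametrizes the unknown embedding by four functions $(\xi_\beta, \eta_\beta, s_{[t]\beta}, \kappa_{[t]\beta})$, Taylor-expands the total strain energy $W(\beta)$ in $\beta$, and uses Lemma \ref{TheLemma} (minimality in the pointwise order $\le$ implies minimality of the Taylor coefficients in dictionary order) to minimize the coefficients $a_1, a_3, a_5$ one at a time as explicit positive quadratic forms in the Taylor data of the unknowns; the stated stress and strain formulas then drop out of the optimal values. Your classical beam-theory argument is more physically transparent and explains where the $r^2-\frac{1}{3}$ profile comes from (the flat straight-normal development supplies the $r^2$ part, zero net axial force supplies the $-\frac{1}{3}$), whereas the paper's computation is mechanical but self-contained and simultaneously delivers Theorem \ref{A-Thm-GC}, the Euler--Bernoulli statement, and the energy asymptotics from one expansion.

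There are two concrete gaps. First, the passage from the zero-net-axial-force condition $\int_{-1}^{1} E^{\angl{0}}_{\beta 11}\,dr \in \bigO(\beta^3)$ to a \emph{uniform} downward shift is not justified: that condition only fixes the $r$-average of the $\bigO(\beta^2)$ contribution of the correction $\bm{\phi}$, not its $r$-profile. You need an additional compatibility step --- e.g., that flatness of $g_{[t]\beta}$ together with $E^{\angl{0}}_{\beta 12}\in\bigO(\beta^3)$ and $E^{\angl{0}}_{\beta 22}+\nu E^{\angl{0}}_{\beta 11}\in\bigO(\beta^3)$ forces $\partial_r^2 E^{\angl{0}}_{\beta 11} = K_{[0]}(\beta B)^2 + \bigO(\beta^3)$, so the residual freedom is affine in $r$ --- plus the zero-net-moment condition (equivalently $\kappa_{[t]\beta}=\kappa_{[0]}+\bigO(\beta^2)$) to kill the linear term. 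Second, the justification you defer at the end is precisely the crux the paper resolves with Lemma \ref{TheLemma}: a priori the minimizer is only known to minimize $W(\beta)$ for each fixed $\beta$, and one must convert that into order-by-order statements about Taylor coefficients; your implicit-function-theorem/$\Gamma$-convergence sketch is plausible but not carried out. Note also that using Theorem \ref{A-Thm-GC} as an input risks circularity, since in the paper both theorems are established simultaneously by the same minimization argument.
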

These two theorems are rigorous versions of Theorem \ref{03-Thm-GC} and Theorem \ref{03-Thm-SA}, and these can be proved in one proof.
The proof consists of the following five parts:
\begin{enumerate}[label=\textbf{(\alph*)}]
    \item \textbf{Geometry of the reference state}
    
    Approximate Riemannian metric of reference state $g_{[0]}$, using geodesic curvature $\kappa_{[0]}$ of center curve $C_{[0]}$ and Gaussian curvature $K_{[0]}$ of the piece of surface $M_{[0]\beta}$.

    \item \textbf{Geometry of the current state}

    Calculate the embedding map $\Phi_\beta: M_{[0]\beta} \to \setE^2$ with unknown functions $\xi_{\beta}$, $\eta_{\beta}$, ${s}_{[t]\beta}$, and $\kappa_{[t]\beta}$.

    \item \textbf{Strain tensor and strain energy}

    Based on the geometries of the reference and current states, the strain tensor and strain energy are specifically obtained.

    \item \textbf{Minimization of the strain energy}
    
    Under these circumstances, the problem becomes an energy minimization problem.
    The embedding map $\varphi$, which was an unknown map, will be reduced to a problem with a finite number of unknown real numbers.

    \item \textbf{Approximation theorems}

    Based on the results of the minimization of the strain energy in the previous section, approximate evaluations of the embedding, the stress tensor, and the strain tensor will be evaluated.

\end{enumerate}

Before proving the theorems, we introduce the following lemma.
\begin{lem}[Two Types of Order on Function Space and Their Relation]
\label{TheLemma}
Let $\mathcal{F}$ be a set of some real-valued functions on a domain $[0,\epsilon] \subseteq \setR$, and assume the following.
\begin{itemize}
    \item Each function $f \in \mathcal{F}$ is smooth enough that there exists its Taylor expansion with polynomial degree $n$.
    \begin{align}
        f(t) \in a_{0} + a_{1}t + \frac{1}{2}a_{2}t^2 + \cdots + \frac{1}{n!}a_{n}t^n + \bigO(t^{n+1}).
    \end{align}
    \item Let $\le$ be a partial order on $\mathcal{F}$ defined by
    \begin{align}
        \label{Eqn-function-ineq}
        f \le g \stackrel{\mathrm{def.}}{\iff} \forall t \in [0,\epsilon], f(t) \le g(t)
    \end{align}
    and there exists minimum element $f^* \in \mathcal{F}$. (i.e. $\forall f \in \mathcal{F}, f^* \le f$)
    \item Let $\preccurlyeq$ be a total preorder on $\mathcal{F}$ defined by
    \begin{align}
        f \preccurlyeq g \stackrel{\mathrm{def}}{\iff} (a_{m})_m \le (b_{m})_m \quad \text{(in dictionary order)}
    \end{align}
    where $(a_m)_m$ and $(b_m)_m$ are the coefficients of Taylor expansions of $f$ and $g$.
\end{itemize}
In this situation, the minimum element $f^*$ with the order $\le$ is the minimum element with the order $\preccurlyeq$.
\QEDA
\end{lem}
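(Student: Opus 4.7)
The plan is to show that $f^* \preccurlyeq f$ for every $f \in \mathcal{F}$, which is exactly the statement that $f^*$ is a minimum element under $\preccurlyeq$. I would first reduce the comparison to a single function by setting $h := f - f^*$, which inherits a Taylor expansion
\begin{align}
    h(t) \inO c_0 + c_1 t + \frac{1}{2} c_2 t^2 + \cdots + \frac{1}{n!} c_n t^n + \bigO(t^{n+1})
\end{align}
with $c_m = a_m^{(f)} - a_m^{(f^*)}$. The pointwise minimality of $f^*$ with respect to $\le$ translates directly into $h(t) \ge 0$ for every $t \in [0,\epsilon]$, so the lemma reduces to the following statement: a smooth nonnegative function on $[0,\epsilon]$ with such an expansion must have its coefficient tuple $(c_0, c_1, \ldots, c_n)$ lexicographically nonnegative.

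The core step is a sign analysis near $t = 0$. Let $k$ be the smallest index in $\{0, 1, \ldots, n\}$ for which $c_k \ne 0$. If no such $k$ exists, the Taylor coefficient tuples of $f$ and $f^*$ coincide up to index $n$, the dictionary comparison is an equality, and $f^* \preccurlyeq f$ holds trivially. Otherwise, I would factor the leading term,
\begin{align}
    h(t) = \frac{c_k}{k!}\, t^k \bigl(1 + \bigO(t)\bigr),
\end{align}
and observe that for $t > 0$ sufficiently small the factor $1 + \bigO(t)$ is positive, so the sign of $h(t)$ coincides with the sign of $c_k$. The constraint $h(t) \ge 0$ then forces $c_k > 0$. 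Since the coefficients of $f^*$ and $f$ agree for $m < k$ and $a_k^{(f^*)} < a_k^{(f)}$, this is precisely the dictionary-order statement $f^* \preccurlyeq f$.

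Running this argument for every $f \in \mathcal{F}$ gives that $f^*$ is a minimum with respect to $\preccurlyeq$. The step I expect to require the most care is the bridging between the global pointwise inequality $h \ge 0$ on $[0,\epsilon]$ and the purely local, at $t = 0$, sign constraint on the leading Taylor coefficient; once that link is made via the asymptotic factorization above, the rest is bookkeeping against the definition of dictionary order. The $\bigO(t^{n+1})$ remainder poses no real difficulty, since whenever $k \le n$ it is of strictly higher order than $t^k$ near the origin and is absorbed into the $\bigO(t)$ factor, so it cannot flip the sign of the leading term on a sufficiently small right neighbourhood of $0$.
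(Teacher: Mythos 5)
Your proof is correct and is essentially the paper's argument in direct (contrapositive) form: the paper assumes some $f^{*\!*}$ precedes $f^*$ in dictionary order and divides $f^*-f^{*\!*}$ by the leading power of $t$ to exhibit a point where the pointwise minimality fails, while you divide $h=f-f^*\ge 0$ by $t^k$ to force the first nonvanishing coefficient to be positive. The key analytic step, linking the global inequality on $[0,\epsilon]$ to the sign of the first differing Taylor coefficient via the limit $t\to 0^+$, is identical in both.
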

\begin{proof}
Let us assume that the minimum element $f^*$ in order $\le$ is not a minimum element of $\mathcal{F}$ in order $\preccurlyeq$.
Then, there exists a function $f^{*\!*} \in \mathcal{F}$ such that $f^{*\!*} \preccurlyeq f^{*}$ and $f^{*\!*} \not\succcurlyeq f^{*}$,
and there exists a natural number $j$ such that
\begin{align}
    a^{*\!*}_{0} &= a^{*}_{0}, \quad
    \cdots, \quad
    a^{*\!*}_{j} = a^{*}_{j}, \quad
    a^{*\!*}_{j+1} < a^{*}_{j+1}
\end{align}
where $(a^{*\!*}_m)_m$ and $(a^{*}_m)_m$ are the coefficients of Taylor expansions of $f^{*\!*}$ and $f^{*}$.
Thus
\begin{align}
    \frac{f^{*}(t) - f^{*\!*}(t)}{t^{j+1}} \in \frac{a^{*}_{j+1} - a^{*\!*}_{j+1}}{(j+1)!} + \bigO(t)
\end{align}
holds.
Therefore, there exists $\tau \in (0,\epsilon)$ such that $f^{*}(\tau) > f^{*\!*}(\tau)$.
This leads $f^{*} \not \le f^{*\!*}$ and derives a contradiction.
\end{proof}

\subsection{Proof of the main theorem}
\label{03-Sec-Proof}
\renewcommand{\thesubsubsection}{(\alph{subsubsection})}
\subsubsection{Geometry of the reference state}
In this part, we will calculate the Riemannian metric of the reference state $g_{[0]\beta}$.
The domain of $g_{[0]\beta}$ depends on $\beta$, but the values of $g_{[0]\beta ij}(u^1,u^2)$ are independent from $\beta$, so we sometimes don't distinguish $g_{[0]\beta}$ and $g_{[0]}$.
\begin{align}
    g_{[0]}
    &=g_{[0]ij}du^i\otimes du^j \\
    \Pare{g_{[0]ij}}
    &=\begin{pmatrix}g_{[0]11} & g_{[0]12} \\ g_{[0]21} & g_{[0]22}\end{pmatrix}
\end{align}
The coordinates $(u^1, u^2)$ are on the centered global chart on $M_{[0]}$, so for all points $p \in M_{[0]}$,
\begin{align}
    g_{[0]22} = \Norm{\Pare{\pd{}{u^2}}_p}_{[0]}^2 = 1.
\end{align}
Let $(u^1,u^2) = (u^1,\tau)$ be the geodesic curve with parameter $\tau$, along with ($u^1$ : const).
Then, the equation of the geodesic curve can be written as
\begin{align}
    0
    &= \od{^2u^i}{\tau^2}+\GammaO^{i}_{jk}\od{u^j}{\tau}\od{u^k}{\tau}
    =\GammaO^{i}_{22}
    \qquad \Pare{\pd{u^1}{\tau}=0, \pd{u^2}{\tau}=1}.
\end{align}
Where $\GammaO^{i}_{jk}$ is a Christoffel symbol of the reference state $M_{[0]\beta}$.
This can be calculated from Riemannian metric $g_{[0]}$
\begin{align}
    \GammaO^{i}_{22}
    =\frac{1}{2}g_{[0]}^{*il}\Pare{\pd{g_{[0]2l}}{u^2}+\pd{g_{[0]2l}}{u^2}-\pd{g_{[0]22}}{u^l}}
    =g_{[0]}^{*il}\pd{g_{[0]2l}}{u^2}.
\end{align}
Here, the matrix $\Pare{g_{[0]}^{*ij}}$ is invertible and $g_{[0]12}|_C = g_{[0]21}|_C = 0$ is satisfied, so $g_{[0]12}=g_{[0]21}$ is obtained as
\begin{align}
    \pd{g_{[0]21}}{u^2}
    &=\pd{g_{[0]12}}{u^2}=0, &
    g_{[0]21}
    &=g_{[0]12}=0.
\end{align}
Here, the geodesic curvature $\kappa_{[0]}$ of the center curve $C_{[0]}$ can be calculated.
\begin{align}
    \begin{aligned}
        \kappa_{[0]}(u^1)
        &=\left.\frac{1}{2\sqrt{g_{[0]11}}^3\sqrt{\det_{i,j}(g_{[0]ij})}}\Pare{g_{[0]11}\Pare{2\pd{g_{[0]12}}{u^1}-\pd{g_{[0]11}}{u^2}}-g_{[0]12}\Pare{\pd{g_{[0]11}}{u^1}}}\right|_{u_2=0} \\
        &=\left.-\frac{1}{2}\pd{g_{[0]11}}{u^2}\right|_{u_2=0}
    \end{aligned}
\end{align}
And also, the Gaussian curvature $K_{[0]}$ on $C_{[0]}\subseteq M_{[0]}$ can be calculated as
\begin{align}
    \begin{aligned}
        K_{[0]}(u^1)
        &=\left.-\frac{1}{\sqrt{g_{[0]11} g_{[0]22}}}\Pare{\pd{}{u^1}\Pare{\frac{1}{\sqrt{g_{[0]11}}}\pd{\sqrt{g_{[0]22}}}{u^1}}+\pd{}{u^2}\Pare{\frac{1}{\sqrt{g_{[0]22}}}\pd{\sqrt{g_{[0]11}}}{u^2}}}\right|_{u^2=0} \\
        &=\left.{\kappa_{[0]}}^2-\frac{1}{2}\frac{\partial^2 g_{[0]11}}{(\partial u^2)^2}\right|_{u^2=0}.
    \end{aligned}
\end{align}
Hence
\begin{align}
    \begin{aligned}
        g_{[0]11}(u^1,u^2)
        &\in g_{[0]11}(u^1,0)+\pd{g_{[0]11}}{u^2}(u^1,0)u^2+\frac{1}{2}\frac{\partial^2g_{[0]11}}{(\partial u^2)^2}(u^1,0)\pare{u^2}^2+\bigO\Pare{\pare{u^2}^3} \\
        &=1-2\kappa_{[0]}u^2+\Pare{{\kappa_{[0]}}^2-K_{[0]}}\pare{u^2}^2+\bigO\Pare{\pare{u^2}^3}
    \end{aligned}
\end{align}
holds.
Finally, we get an approximation of the Riemannian metric of the reference state $g_{[0]}$.
\begin{align}
    \label{03-Eq-first0}
    \begin{aligned}
        \Pare{g_{[0]ij}}
        &\in\begin{pmatrix}
        1-2\kappa_{[0]}u^2+\Pare{{\kappa_{[0]}}^2-K_{[0]}}\pare{u^2}^2+\bigO\Pare{\pare{u^2}^3} & 0 \\
        0 & 1
        \end{pmatrix}
    \end{aligned}
\end{align}
And also, the orthonormal frame can be obtained as
\begin{align}
    e^{\angl{0}}_{1}&=\frac{1}{\sqrt{g_{[0]11}}}\pd{}{u^1}, &
    e^{\angl{0}}_{2}&=\pd{}{u^2}, &
    \theta^{\angl{0}1}&=\sqrt{g_{[0]11}}du^1, &
    \theta^{\angl{0}2}&=du^2.
\end{align}

\subsubsection{Geometry of the current state}
Let $\bm{p}_{[t]\beta}: D_{\beta}\to \setE^2$ be a solution to the energy minimization problem%
\footnote{We do not give proof for the existence of the solution and its smoothness in this paper.
}.
Let $\bm{c}_{[t]\beta}$ be a parameterization of the center curve $C_{[t]\beta}\subseteq \setE^2$, i.e. $\bm{c}_{[t]\beta}(u^1) = \bm{p}_{[t]\beta}(u^1, 0)$.
Let $\dot{\dashedph}$ be a differential operator with respect to $u^1$, $\curl{\bm{e}_{[t]\beta 1}, \bm{e}_{[t]\beta 2}}$ be a orthonormal frame on $C_{[t]\beta}$,
and $s_{[t]\beta}$ be a speed%
\footnote{Also $s_{[0]}$ can be defined as $s_{[0]} = \norm{\dot{\bm{c}}_{[0]}}$, but this is equal to $1$ on the centered global chart.
Note that these symbols $s_{[0]}$ and $s_{[t]}$ just represent speeds along with the center curves $C_{[0]}$ and $C_{[t]}$, and they are completely different from the symbol of the normalized coordinate $s$ of $(s,r)$.}
of the curve $\bm{c}_{[t]\beta}$ and $\kappa_{[t]\beta}$ be a planar curvature of the curve $\bm{c}_{[t]\beta}$.
These functions are defined as
\begin{align}
    s_{[t]\beta} &= \norm{\dot{\bm{c}}_{[t]\beta}}, &
    \bm{e}_{[t]\beta 1} &= \frac{\dot{\bm{c}}_{[t]\beta}}{s_{[t]\beta}}, &
    \bm{e}_{[t]\beta 2} &= \begin{pmatrix} 0 & -1 \\ 1 & 0 \end{pmatrix}\bm{e}_{[t]\beta 1}, &
    \dot{\bm{e}}_{[t]\beta 1} &= s_{[t]\beta} \kappa_{[t]\beta} \bm{e}_{[t]\beta 2}.
\end{align}
In this situation, there exist functions $\xi_{\beta}: D_{\beta} \to \setR, \eta_{\beta}: D_{\beta} \to \setR$ which satisfy
\begin{align}
    \bm{p}_{[t]\beta}(u^1, u^2)
    &=\bm{c}_{[t]\beta}(u^1) + \xi_{\beta}(u^1,u^2)\bm{e}_{[t]\beta 1}(u^1) + \eta_{\beta}(u^1,u^2)\bm{e}_{[t]\beta 2}(u^1), \\
    \label{03-Eq-xi}
    \xi_{\beta}(u^1,0)&=0, \\
    \label{03-Eq-eta}
    \eta_{\beta}(u^1,0)&=0, \quad
    \eta_{\beta 2}(u^1,0)>0.
\end{align}
Note that the embedding $\Phi_\beta:M_{[0]\beta} \to \setE^2$ is characterized with these functions $(\xi_{\beta}, \eta_{\beta}, s_{[t]\beta}, \kappa_{[t]\beta})$, without rigid transformations.
The tangent vectors can be written as
\begin{align}
    \bm{p}_{[t]\beta 1}
    &=\pd{\bm{p}_{[t]\beta}}{u^1}
    =\Pare{\xi_{\beta 1}-\kappa_{[t]\beta} \eta_{\beta}+{s}_{[t]\beta}}\bm{e}_{[t]\beta 1}
    +\Pare{\eta_{\beta 1}+\kappa_{[t]\beta} \xi_{\beta}}\bm{e}_{[t]\beta 2}, \\
    \bm{p}_{[t]\beta 2}
    &=\pd{\bm{p}_{[t]\beta}}{u^1}
    =\xi_{\beta 2} \bm{e}_{[t]\beta 1} + \eta_{\beta 2} \bm{e}_{[t]\beta 2}
\end{align}
where $\xi_{\beta i}$ and $\eta_{\beta i}$ are defined by $\xi_{\beta i} = \spd{\xi_{\beta}}{u^i}, \eta_{\beta i} = \spd{\eta_{\beta}}{u^i}$.
The coefficients of Riemannian metric $g_{[t]\beta ij} = \bm{p}_{[t]\beta i}\cdot \bm{p}_{[t]\beta j}$ of the current state is obtained as follows.
\begin{align}
    \hspace{-0.5em}
    \begin{aligned}
    &\Pare{g_{[t]\beta ij}} \\
    ={}&
    \begin{pmatrix}
        \Pare{\eta_{\beta 1}+\kappa_{[t]\beta} \xi_{\beta}}^2+\Pare{\xi_{\beta 1}-\kappa_{[t]\beta} \eta_{\beta}+{s}_{[t]\beta}}^2
        & \eta_{\beta 2} \Pare{\eta_{\beta 1}+\kappa_{[t]\beta} \xi_{\beta}}+\xi_{\beta 2} \Pare{\xi_{\beta 1}-\kappa_{[t]\beta} \eta_{\beta}+s_{[t]\beta}} \\
        \eta_{\beta 2} \Pare{\eta_{\beta 1}+\kappa_{[t]\beta} \xi_{\beta}}+\xi_{\beta 2} \Pare{\xi_{\beta 1}-\kappa_{[t]\beta} \eta_{\beta}+s_{[t]\beta}}
        & (\eta_{\beta 2})^2+(\xi_{\beta 2})^2
    \end{pmatrix}
    \end{aligned}
\end{align}

\subsubsection{Strain tensor and strain energy}
By the definition Eq.(\ref{Eqn-def-strain}), the Green's strain tensor field $E_{\beta}$ can be calculated as
\begin{align}
    E_{\beta}
    &=E_{\beta ij}^{\angl{0}}\theta^{\angl{0}i}\otimes\theta^{\angl{0}j}, &
    E_{\beta ij}^{\angl{0}}
    &=\frac{1}{2}\Pare{g_{[t]\beta ij}^{\angl{0}}-g_{[0]ij}^{\angl{0}}}
    =\frac{1}{2}\Pare{g_{[t]\beta ij}^{\angl{0}}-\delta_{ij}}.
\end{align}
And also, the strain energy density $\mathcal{W}_\beta$ can be obtained as
\begin{align}
    \mathcal{W}_{\beta}
    =\frac{1}{2}C^{\angl{0}ijkl}E^{\angl{0}}_{\beta ij}E^{\angl{0}}_{\beta kl} \theta^{\angl{0}1} \wedge \theta^{\angl{0}2}
    =\frac{1}{2}C^{\angl{0}ijkl}E^{\angl{0}}_{\beta ij}E^{\angl{0}}_{\beta kl} \beta B \sqrt{g_{[0]11}} ds \wedge dr.
\end{align}
Then, the following strain energy $W$ is a function of $\beta$
\begin{align}
    \begin{aligned}
        \label{Eqn-W}
        W(\beta)
        &= \int_{M_{\beta}} \mathcal{W}_{\beta}
        = \int_I \Pare{ \beta B \int_{-1}^{1} \frac{1}{2} C^{\angl{0}ijkl} E^{\angl{0}}_{\beta ij}E^{\angl{0}}_{\beta kl} \sqrt{g_{[0]11}} dr} ds
    \end{aligned}
\end{align}
where $(s,r)$ are normalized global coordinates defined in Eq.(\ref{03-Eqn-NormalizedDomainBeta}).
Note that the strain energy $W$ is not just a function of $\beta$, but also can be considered as a functional of the embedding $\Phi$.
The strain energy does not change with rigid transformations, so $W$ can also be regarded as a functional of $(\xi_\beta, \eta_\beta, s_{[t]\beta}, \kappa_{[t]\beta})$.
This point of view is helpful in the later definition Eq.(\ref{Eqn-FunctionSpace}).
Here, the strain energy $W(\beta)$ can be approximated by Taylor's theorem with Landau's notation
\begin{align}
    W(\beta)
    &\in \int_I a_0 ds + \beta\int_I a_1 ds + \frac{\beta^2}{2}\int_I a_2 ds + \cdots + \frac{\beta^n}{n!}\int_I a_n ds + \bigO(\beta^{n+1}), \\
    \label{Eqn-def-am}
    a_m
    &= \left.\Pare{\od{}{\beta}}^m\Pare{ \beta B \int_{-1}^{1} \frac{1}{2} C^{\angl{0}ijkl} E^{\angl{0}}_{\beta ij}E^{\angl{0}}_{\beta kl} \sqrt{g_{[0]11}} dr}\right|_{\beta=0}.
\end{align}
Note that each coefficient $a_m$ is a function of $s$.
Let $\xi^{(i,j,k)}$, $\eta^{(i,j,k)}$, $s_{[t]}^{(i,j)}$, and $\kappa_{[t]}^{(i,j)}$ be functions of $s$, defined as
\begin{align}
    \xi^{(i,j,k)}(s)
    &=\lim_{\beta\to0} \left.\frac{\partial^{i+j+k} \xi_{\beta}(u^1,u^2)}{\pare{\partial u^1}^i \pare{\partial u^2}^j\partial {\beta}^k}\right|_{\uu}, &
    \eta^{(i,j,k)}(s)
    &=\lim_{\beta\to0} \left.\frac{\partial^{i+j+k} \eta_{\beta}(u^1,u^2)}{\pare{\partial u^1}^i \pare{\partial u^2}^j\partial {\beta}^k}\right|_{\uu}, \\
    s_{[t]}^{(i,j)}(s)
    &=\lim_{\beta\to0} \left.\frac{\partial^{i+j} s_{[t]\beta}(u^1)}{\pare{\partial u^1}^i\partial {\beta}^j}\right|_{u^1=s}, &
    \kappa_{[t]}^{(i,j)}(s)
    &=\lim_{\beta\to0} \left.\frac{\partial^{i+j} \kappa_{[t]\beta}(u^1)}{\pare{\partial u^1}^i\partial {\beta}^j}\right|_{u^1=s}.
\end{align}
Then, the real sequence $\curl{a_m}$ can be calculated with these unknown functions $\xi^{(i,j,k)}$, $\eta^{(i,j,k)}$, $s_{[t]}^{(i,j)}$, and $\kappa_{[t]}^{(i,j)}$.
Note that these unknown functions are not independent.
For example,
\begin{align}
    \label{03-Eq-xi-diff}
    \xi^{(i+1,j,k)}(s)
    &=\od{}{s}\xi^{(i,j)}(s), &
    s_{[t]}^{(i+1,j)}(s)
    &=\od{}{s}s_{[t]}^{(i,j)}(s)
\end{align}
holds.
By the condition Eq.(\ref{03-Eq-xi}), $\xi^{(i,0,k)}(s)=0$ holds for any natural numbers $(i,k)$.
Similarly, $\eta^{(i,0,k)}(s)=0$ also holds.

\subsubsection{Minimization of the strain energy}
To apply Lemma \ref{TheLemma} here, we need to consider the following.
\begin{itemize}
    \item Let us extend the domain of strain energy Eq.(\ref{Eqn-W}) with $W(0) = 0$.
    Then the domain of $\beta$ will be an interval $[0,1]$.
    \item Let $\mathcal{F}$ be a set of functions defined by
    \begin{align}
        \label{Eqn-FunctionSpace}
        \mathcal{F}
        = \Set{\beta\mapsto W(\beta ; \bar{\xi}_{\beta}, \bar{\eta}_{\beta}, \bar{s}_{[t]\beta}, \bar{\kappa}_{[t]\beta}) | \begin{gathered}
            \bar{\xi}_{\beta}, \bar{\eta}_{\beta}\in C^\infty(D_{\beta}), \ \bar{\xi}_{\beta}|_{C}=\bar{\eta}_{\beta}|_{C}=0, \ \bar{\eta}_{\beta2}|_{C}>0, \\
            \bar{s}_{[t]\beta} \in C^\infty(I), \ \bar{s}_{[t]\beta} > 0, \bar{\kappa}_{[t]\beta} \in C^\infty(I)
        \end{gathered}}
    \end{align}
    where the function $W(\beta ; \bar{\xi}_{\beta}, \bar{\eta}_{\beta}, \bar{s}_{[t]\beta}, \bar{\kappa}_{[t]\beta})$ is an extended version of $W(\beta)$ with explicit arguments of the functional $W$.
    \item The minimum element of the function space $\mathcal{F}$ with the partial order $\le$ defined by Eq.(\ref{Eqn-function-ineq}) is obviously $W(\beta) = W(\beta ; {\xi}_{\beta}, {\eta}_{\beta}, {s}_{[t]\beta}, {\kappa}_{[t]\beta})$.
\end{itemize}
Therefore, the lemma can be applied here, so the minimizing $W(\beta)$ delivers minimizing the sequence $\curl{a_m}$ with dictionary order from $a_0$.

\vspace{1.2em}
\noindent
\textbf{\textsf{\normalsize 0th derivative}}
\begin{align}
    a_0
    &=0
\end{align}
holds.
This is obvious from Eq.(\ref{Eqn-def-am}).

\vspace{1.2em}
\noindent
\textbf{\textsf{\normalsize 1st derivative}}
\begin{align}
    a_1
    &=\frac{
        \begin{aligned}
            &2 {\nu} \Pare{1-{\eta^{(0,1,0)}}^2-{\xi^{(0,1,0)}}^2}
            +2 {\nu} \Pare{{\eta^{(0,1,0)}}^2-1} {{s}_{[t]}^{(0,0)}}^2
            +{\eta^{(0,1,0)}}^4\\
            &\quad+2 {\eta^{(0,1,0)}}^2 \Pare{{\xi^{(0,1,0)}}^2-1}
            +\Pare{{\xi^{(0,1,0)}}^2+{{s}_{[t]}^{(0,0)}}^2}^2
            +2 \Pare{1-{\xi^{(0,1,0)}}^2-{{s}_{[t]}^{(0,0)}}^2} \\
        \end{aligned}
    }{4 \Pare{1-{\nu}^2}/(Y B)}
\end{align}
holds%
\footnote{This complicated expression was calculated with the Wolfram Engine, not by hand.
See \url{https://github.com/hyrodium/ElasticSurfaceEmbedding-wolfram} for our scripts for the calculation.}.
This is a positive function with variables ${{\xi^{(0,1,0)}}, \eta^{(0,1,0)}}, {{s}_{[t]}^{(0,0)}}$.
To minimize this function $a_1$, we get the following values.
\begin{align}
    {\xi^{(0,1,0)}}&=0, &
    {\eta^{(0,1,0)}}&=1, &
    {{s}_{[t]}^{(0,0)}}&=1.
\end{align}
With these values
\begin{align}
    a_1=0
\end{align}
holds.
By the condition Eq.(\ref{03-Eq-xi-diff}), $\xi^{(i,1,0)}=0$ holds.
Similarly, $\eta^{(i,1,0)}=0$ and ${{s}_{[t]}^{(i,0)}}=0$ also hold.

\vspace{1.2em}
\noindent
\textbf{\textsf{\normalsize 2nd derivative}}
\begin{align}
    a_2
    &=0
\end{align}
holds.
This is because $E^{\angl{0}}_{\beta ij}(s,\beta B r)\in O(\beta)$ holds by the minimization of $a_1$.

\vspace{1.2em}
\noindent
\textbf{\textsf{\normalsize 3rd derivative}}
\begin{align}
    a_3
    &=\frac{
        \begin{aligned}
            &B^2 \Pare{4 {\nu} {\eta^{(0,2,0)}}  (\kappa_{[0]}(s)-{\kappa_{[t]}^{(0,0)}}) +2 (\kappa_{[0]}(s)-{\kappa_{[t]}^{(0,0)}})^2+(1-{\nu}){\xi^{(0,2,0)}}^2 +2 {\eta^{(0,2,0)}}^2} \\
            &\quad + 3\Pare{(1-{\nu}) {\xi^{(0,1,1)}}^2 + 2{{s}_{[t]}^{(0,1)}}^2 + 2{\eta^{(0,1,1)}}^2 + 4 {\nu} {\eta^{(0,1,1)}} {{s}_{[t]}^{(0,1)}}} \\
        \end{aligned}
    }{\Pare{1-{\nu}^2}/(Y B)}
\end{align}
holds.
This is a positive function with variables ${\xi^{(0,1,1)}}$, ${\xi^{(0,2,0)}}$, ${\eta^{(0,1,1)}}$, ${\eta^{(0,2,0)}}$, ${{s}_{[t]}^{(0,1)}}$, ${\kappa_{[t]}^{(0,0)}}$.
To minimize this function $a_3$, we get the following values.
\begin{align}
    {\xi^{(0,2,0)}}&=0, &
    {\xi^{(0,1,1)}}&=0, &
    {\eta^{(0,2,0)}}&=0, &
    {\eta^{(0,1,1)}}&=0, &
    {{s}_{[t]}^{(0,1)}}&=0, &
    {\kappa_{[t]}^{(0,0)}}&={\kappa_{[0]}}
\end{align}
With these values
\begin{align}
    a_3=0
\end{align}
holds.
In the same discussion with $a_1$,
${\xi^{(i,2,0)}}=0$,
${\xi^{(i,1,1)}}=0$,
${\eta^{(i,2,0)}}=0$,
${\eta^{(i,1,1)}}=0$,
${{s}_{[t]}^{(i,1)}}=0$
hold.

\vspace{1.2em}
\noindent
\textbf{\textsf{\normalsize 4th derivative}}
\begin{align}
    a_4
    &=0
\end{align}
holds.
This is because $E^{\angl{0}}_{\beta ij}(s,\beta B r)\in O(\beta^2)$ holds by the minimization of $a_3$.

\vspace{1.2em}
\noindent
\textbf{\textsf{\normalsize 5th derivative}}
\begin{align}
    a_5
    &=\frac{
        \begin{aligned}
            &3 B^4 \Pare{2 K_{[0]} (2 {\nu} {\eta^{(0,3,0)}}+K_{[0]})+(1-{\nu}) {\xi^{(0,3,0)}}^2+2 {\eta^{(0,3,0)}}^2} \\
            &\quad +10 B^2 \PARE{0.5cm}{
                2 K_{[0]} ({\nu} {\eta^{(0,1,2)}}+{{s}_{[t]}^{(0,2)}})
                -8 {\nu} {\kappa_{[t]}^{(0,1)}} {\eta^{(0,2,1)}}
                +2 {\eta^{(0,3,0)}} ({\nu} {{s}_{[t]}^{(0,2)}}+{\eta^{(0,1,2)}}) \\
                &\hspace{7.5em} +(1-{\nu}) {\xi^{(0,1,2)}} {\xi^{(0,3,0)}}
                +2 (1-{\nu}) {\xi^{(0,2,1)}}^2
                +4 {\kappa_{[t]}^{(0,1)}}^2
                +4 {\eta^{(0,2,1)}}^2
            } \\
            &\quad + 15\Pare{4 {\nu} {\eta^{(0,1,2)}} {{s}_{[t]}^{(0,2)}} + 2{\eta^{(0,1,2)}}^2 + 2{{s}_{[t]}^{(0,2)}}^2 + (1-{\nu}){\xi^{(0,1,2)}}^2}
        \end{aligned}
    }{{\Pare{1-\nu}^2}/(Y B)}
\end{align}
holds.
This is a positive function with variables ${\xi^{(0,1,2)}}$, ${\xi^{(0,2,1)}}$, ${\xi^{(0,3,0)}}$, ${\eta^{(0,1,2)}}$, ${\eta^{(0,2,1)}}$, ${\eta^{(0,3,0)}}$, ${{s}_{[t]}^{(0,2)}}$, ${\kappa_{[t]}^{(0,1)}}$.
To minimize this function $a_5$, we get the following values.
\begin{align}
    \begin{gathered}
        \begin{aligned}
            {\xi^{(0,1,2)}} &= 0, & \qquad
            {\xi^{(0,2,1)}} &= 0, & \qquad
            {\xi^{(0,3,0)}} &= 0, & \qquad
            {\eta^{(0,1,2)}} &= \frac{1}{3}\nu K_{[0]} B^2,
        \end{aligned} \\
        \begin{aligned}
            {\eta^{(0,2,1)}} &= 0, & \qquad
            {\eta^{(0,3,0)}} &= -\nu K_{[0]}, & \qquad
            {{s}_{[t]}^{(0,2)}} &= -\frac{1}{3}K_{[0]} B^2, & \qquad
            {\kappa_{[t]}^{(0,1)}} &= 0.
        \end{aligned}
    \end{gathered}
\end{align}
With these values
\begin{align}
    a_5=\frac{8}{3}Y K_{[0]}B^5
\end{align}
holds.

\vspace{1.2em}
\noindent
\textbf{\textsf{\normalsize 6th derivative and more}}

\begin{align}
    a_6
    \in \left.\Pare{\od{}{\beta}}^6\Pare{\frac{Y K_{[0]}}{45}(B\beta)^5+\bigO(\beta^6)}\right|_{\beta=0}
    = \bigO(1)
\end{align}
holds, and this cannot be calculated more.
This is because we have only obtained up to a second-order approximation of the Riemannian metric $g_{[0]}$ in Eq.(\ref{03-Eq-first0}).
For the same reason, $a_7,a_8,\dots$ cannot be evaluated, and the rest of the unknown functions such as $\xi^{(0,4,0)}$ and $\kappa_{[t]}^{(0,2)}$ cannot be obtained.

\subsubsection{Approximation theorems}
We have obtained some of the functions $\xi^{(i,j,k)}$, $\eta^{(i,j,k)}$, $s_{[t]}^{(i,j)}$, and $\kappa_{[t]}^{(i,j)}$ explicitly.
By using these results, the following approximations can be evaluated.

\begin{itemize}
    \item \textbf{Strain energy}

    The strain energy $W(\beta)$ can be evaluated as
    \begin{align}
        W(\beta)
        &\in \frac{Y}{45} \Pare{\int_I K_{[0]}B^5 ds} \beta^5 + \bigO(\beta^6).
    \end{align}

    \item \textbf{Approximation of embedding} (Theorem \ref{A-Thm-GC}, Theorem \ref{03-Thm-GC})

    The Riemannian metric $g_{[t]}$ and the planar curvature $\kappa_{[t]}$ on the center curve $C_{[t]}$ can be evaluated as
    \begin{align}
        \label{Eqn-gt-approx}
        g_{[t]\beta}|_C &\in g_{[0]}|_C + \bigO(\beta^2) \\
        \kappa_{[t]\beta} &\in \kappa_{[0]} + \bigO(\beta^2).
    \end{align}
    This approximation Eq.(\ref{Eqn-gt-approx}) can be improved with the higher order of $\beta$, but we don't need it for constructing $M_{[s]}$.

    \item \textbf{Approximation of stress tensor} (Theorem \ref{A-Thm-SA}, Theorem \ref{03-Thm-SA})

    The 2nd Piola-Kirchhoff stress tensor field $S_\beta$ is evaluated as
    \begin{align}
        S_{\beta}^{\angl{0}11} &\in Y E^{\angl{0}}_{\beta 11} + \bigO(\beta ^3) ,&
        S_{\beta}^{\angl{0}12},
        S_{\beta}^{\angl{0}21},
        S_{\beta}^{\angl{0}22} &\in \bigO(\beta ^3).
    \end{align}
    This means the stress state is approximately $u^1$-directional uniaxial.

    \item \textbf{Approximation of strain tensor} (Theorem \ref{A-Thm-SA}, Theorem \ref{03-Thm-SA})

    The Green's strain tensor field $E_\beta$ is evaluated as
    \begin{align}
        \begin{aligned}
            &E^{\angl{0}}_{\beta 11} \in \frac{1}{2}K_{[0]}\Pare{\beta B}^2\Pare{r^2-\frac{1}{3}}+\bigO(\beta^3), \\
            &E^{\angl{0}}_{\beta 12} = E^{\angl{0}}_{\beta 21} \in \bigO(\beta^3), \qquad
            E^{\angl{0}}_{\beta 22} \in -\nu E^{\angl{0}}_{\beta 11} + \bigO(\beta^3).
        \end{aligned}
    \end{align}

    \item \textbf{Relation to Euler-Bernoulli assumption}

    The $\xi_\beta$ function which represents longitudinal deformation can be evaluated as
    \begin{align}
        \xi_{\beta}(u^1,u^2)
        = \xi_{\beta}(s,\beta r B)
        \in \bigO(\beta^4).
    \end{align}
    This means, if the breadth is sufficiently small, then the geodesic curve ($u^1$: const) on $M_{[0]}$ is still geodesic on $M_{[t]\beta}$, and the geodesic curve is perpendicular to the center curve in both the reference state and the current state.
    These properties are natural generalizations of Euler-Bernoulli's assumption.
\end{itemize}
\QEDB

\newpage
\section{Papercraft kit}
\label{Sec-Appendix-B}
Bonus round!
In this appendix, we will provide some papercraft kits from Section \ref{Sec-results}.
You can print \& cut this paper, and create your own papercraft models!

\subsection{Paraboloid}
The following Fig.\ref{PapercraftKitParaboloid} is a set of the embeddings calculated in Section \ref{Sec-Paraboloid-Numerical-Result}.
Printing this paper on an A4 paper four times and cutting it out yields 40 pieces of paper.
Weaving them together will produce the curved surface as shown in Fig.\ref{03-Fig-Paraboloid-P}.
\begin{figure}[H]
    \centering
    \includegraphics[page=37,clip,width=150mm]{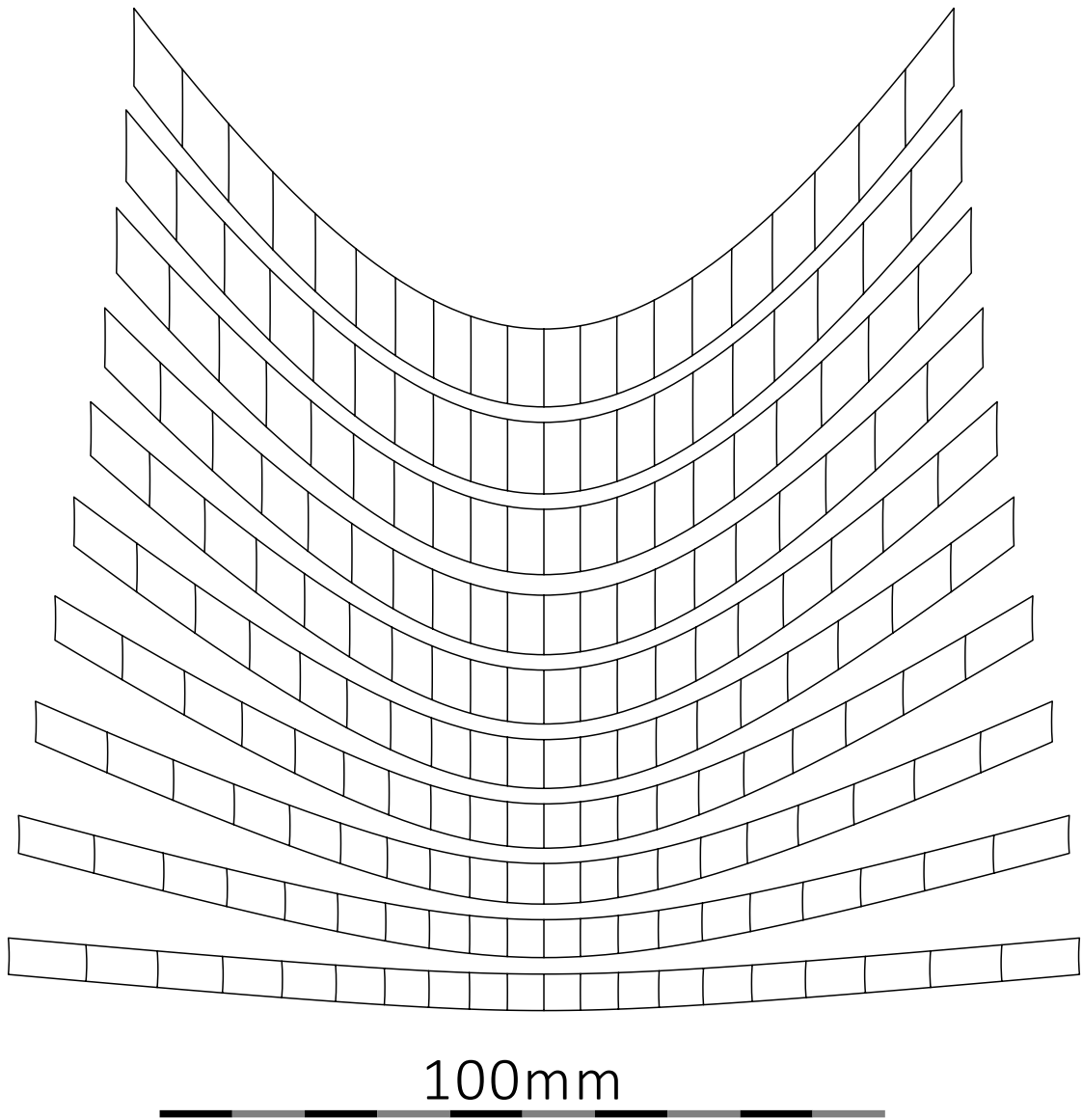}
    \caption{Papercraft kit of the paraboloid surface.}
    \label{PapercraftKitParaboloid}
\end{figure}

\newpage
\subsection{Hyperbolic paraboloid}
The following Fig.\ref{PapercraftKitHyperbolicParaboloid} is a set of the embeddings calculated in Section \ref{Sec-HyperbolicParaboloid-NumericResult}.
Printing this paper on an A4 paper four times and cutting it out yields 40 pieces of paper.
Weaving them together will produce the curved surface as shown in Fig.\ref{03-Fig-Paraboloid-P}.
\begin{figure}[H]
    \centering
    \includegraphics[page=38,clip,width=150mm]{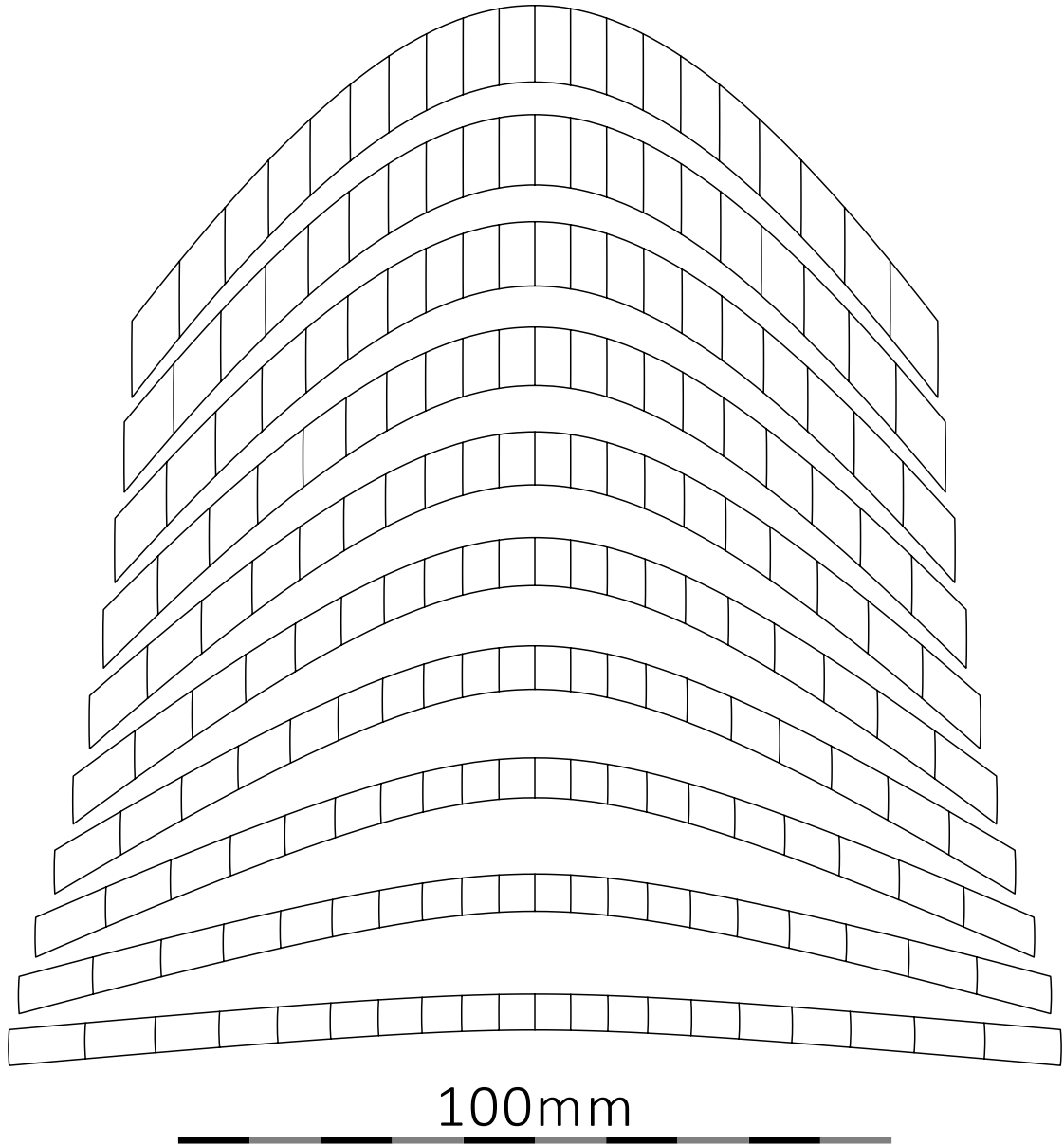}
    \caption{Papercraft kit of the hyperbolic paraboloid surface.}
    \label{PapercraftKitHyperbolicParaboloid}
\end{figure}

\printbibliography

\end{document}